\newtheorem{theorem}{Theorem}
\newtheorem{lemma}{Lemma}
\newtheorem{corollary}{Corollary}
\newtheorem{proposition}{Proposition}
\newtheorem{remark}{Remark}
\newtheorem{definition}{Definition}
\newcommand{\bs}[1]{\boldsymbol{#1}}
\newcommand{\im}{\bs{\rm i}}
\newcommand{\spann}{{\rm span}}
\newcommand{\supp}{{\rm supp}}
\newcommand{\Deg}{{\rm deg}}
\newcommand{\dist}{{\rm dist}}
\newcommand{\V}{\mathcal{V}}
\newcommand{\A}{\mathcal{A}}
\newcommand{\T}{\mathcal{T}}
\newcommand{\F}{\mathcal{F}}
\newcommand{\W}{\mathcal{W}}
\newcommand{\seg}{\textcolor{black}}
\newcommand{\hig}{\textcolor{black}}
\newcommand{\segB}{\textcolor{black}}
\title{{\Large {\bf Quantum walks induced by Dirichlet random walks on infinite trees  
}
}}
\author{ 
{\small 
Yusuke Higuchi,$^{1}$ 
\footnote{email address 
higuchi@cas.showa-u.ac.jp 
}
\quad 
Etsuo Segawa,$^{2}$ 
\footnote{e-segawa@m.tohoku@ac.jp 
}\quad
}\\ 
{\scriptsize $^{1}$ 
Mathematics Laboratories, College of Arts and Sciences, Showa University
}\\
{\scriptsize 
Fuji-Yoshida, Yamanashi 403-005, Japan
} \\
{\scriptsize $^2$ 
Graduate School of Information Sciences, Tohoku University, 
}\\
{\scriptsize 
Aoba, Sendai 980-8579, Japan
} \\
} 
\date{\empty }
\begin{document}
\maketitle

\par\noindent
\begin{small}
\par\noindent
{\bf Abstract}. 
We consider the Grover walk on infinite trees from the view point of spectral analysis. 
From the previous works, 
infinite regular trees provide localization. 
In this paper, we give the complete characterization of the eigenspace of this Grover walk, which involves localization of its behavior and recovers the previous works. 
Our result suggests that the Grover walk on infinite trees may be regarded as a limit of the quantum walk induced by the isotropic random walk with the Dirichlet boundary condition at the $n$-th depth
rather than one with the Neumann boundary condition. 

\footnote[0]{
{\it Key words and phrases.} 
Quantum walks, infinite tree, eigenspace, flow, Dirichlet condition
}

\end{small}

\setcounter{equation}{0}

\section{Introduction}
A discrete-time quantum walk on a graph $G=(V,E)$ has been proposed by Gudder (1988) as ``quantum graphic dynamics" in his book~\cite{Gud}; 
a walker jumps to neighbors with a matrix valued weight at each time step so that the time evolution of the whole state is unitary. 
This walk can be interpreted as a dynamics on the arcs of $G$~\cite{Sev}. 
From this observation, 
it is possible to naturally connect quantum walks~\cite{HKSS:YMJ,Tan} on graphs 
and the scatterings of one-dimensional plane wave on the wire~\cite{ExnSeb}. 
In this paper, we identify the discrete-time quantum walk with a pair of $U$ and $\mu$: $U$ is a unitary operator 
on $\ell^2(A):=\ell^2(A,\mathbb{C})$  with the standard inner product so that 
	\[ \langle \delta_f,  U\delta_e\rangle \neq 0 \Leftrightarrow t(e)=o(f) \]
and $\mu$ is the measurement $\mu: \ell^2(A,\mathbb{C})\to \ell^1(V,\mathbb{R}_{\geq 0})$ such that 
	\[ (\mu(\psi))(u)=\sum_{e:t(e)=u} |\psi(e)|^2, \] 
where $(\mu(\psi))(u)$ is interpreted as the findings probability at vertex $u$ if $||\psi||=1$. 
Here $A$ is the set of symmetric arcs induced by $E$, and $o(e),t(e)\in V$ are the origin and terminus of arc $e\in A$, respectively. 
We often call the unitary operator $U$ quantum walk simply.
Due to the unitarity of the time evolution $U$, we can define the distribution $\nu_n^{(\psi_0)}: V\to [0,1]$ at each time $n$ with the initial state 
$\psi_0\in \ell^2(A)$ with $||\psi_0||=1$ such that
	\[ \nu_n^{(\psi_0)} \equiv \mu(U^n\psi_0). \]
One of the main topics of the study of quantum walks is its asymptotics of $\nu_n$ for large $n$ e.g., localization, linear spreading and recurrent property and so on 
(see~\cite{Kon,CGMV} and its references). 

The Grover walk is regarded as the induced quantum walk by the underlying isotropic random walk $T$~\cite{EHSW,Sze}. 
Here $T$ is a self-adjoint operator on $\ell^2(V)$ with the standard inner product such that 
	\[\langle \delta_v,  T\delta_u\rangle=\frac{\bs{1}_{\{A\}}(u,v)}{\sqrt{\Deg (u) \Deg (v)}}. \]
Remark that $T$ is unitary equivalent to the transition operator of the isotropic random walk. 
For a {\it finite} graph $G$, the spectrum of the Grover walk $U$ is simply decomposed into 
	\[ \sigma(U)=\{e^{\pm \im \arccos \sigma(T)}\} \cup \{1\}^{b_1} \cup \{-1\}^{b_1-1+\bs{1}_{B}}, \]
where $b_1$ and $b_1-1+\bs{1}_B$ are the multiplicities of the eigenvalues $1$ and $-1$, respectively.
Here $b_1$ coincides with the first Betti number of $G$, and $\bs{1}_B = 1$ if G is bipartite, $\bs{1}_B = 0$ otherwise. 
We call the first term {\it inherited part} from the underlying RW, and the last two terms {\it birth part}. 
As is seen the above multiplicities in the birth part reflect a homological structure of the graph. 
Every eigenfunction of birth part has a finite support corresponding to fundamental cycles~\cite{HKSS:JFA}. 
On the other hand, the birth part never appears for a {\it finite} tree, because there exist no cycles.
Let us consider whether this kind of statement still holds for an {\it infinite} graphs. 
Naturally, if there exists a cycle, then we can construct the eigenfunction with a finite support along it in the same fashion as in the finite case.
This implies  localization happens if the initial state has an overlap between the birth eigenspace.
Even if there are no cycles in a graph, the localization of the Grover walk may occurs. 
In contrast, we know localization occurs on some infinite tree with some appropriate initial state~\cite{CHKS}.

We first show in this paper that for infinite tree whose minimal degree is at least $3$, 
actually there are infinite number of birth eigenfunctions (see Theorem~1). 
The birth eigenfunction is generated by a finite energy {\it flow}~\cite{Bol} starting from a vertex to the infinite down stream. 
One-dimensional lattice is also a tree with degree $\kappa=2$ and we can make an infinite flow. 
However the ``function" generated by such a flow is $\ell^\infty(A)$ but no longer $\ell^2(A)$. 
From the above reason, the Grover walk on the one-dimensional lattice never exhibit localization as is known that 
this walk is trivially one way going without any interactions. 
From the above reason, we assume that the minimal degree is at least $3$ throughout this paper. 
Once the minimal degree $\kappa_0+1\geq 3$ is assumed, then the square summability of the function generated by a flow is ensured (see Lemma~1). 
Studies of localization makes attention by many researchers. 
In the previous studies, it has been known that the derivation of localization of Grover walk 
are cycle structure~\cite{HKSS:JFA,MS} of the graph and 
inherited eigenspace from underlying isotropic random walk~\cite{HS,KOS}. 
Thus the positivity of such a geometric constant in Corollary~3 can be said to be a new criterion for the localization. 

To understand such a spectral structure of the Grover walk on infinite trees, 
we introduce a quantum walk induced by the random walk under the Dirichlet boundary condition at the $n$-depth; 
this type of random walk may be called Dirichlet random walk in this paper. 
We show that if we regard the Grover walk on the infinite tree as a limit of this induced QW for $n\to \infty$, then the birth part naturally appears. 
More precisely, the eigenfunctions of birth part of the approximate QW coincide with 
the eigenfunctions of the original Grover walk within the $n$-th depth (see Theorem~4). 
The positivity of a kind of the Cheeger constant ensures the square summability (see Corollary~3).
We note that the quantum walk induced by the Dirichlet random walk has essentially appeared 
as a tool of several quantum spatial search algorithms on graphs; 
the target vertices in these algorithms correspond to the Dirichlet cut-off ones. 
The previous works~\cite{AKR,Sze,SKW} with the Dirichlet cut-off underlying random walk for a quantum search problem focused on the inherited part 
since the ``uniform" state, which is the initial state of these algorithms, has \segB{small} overlap to the birth part \segB{for large system size}. 
On the other hand, we propose an advantage of focusing on the birth part with the Dirichlet cut-off underlying random walk 
to extract a geometric structure of infinite graphs. 

Connecting the Grover walk on the infinite tree to the quantum walks with the other construction~\cite{JoyMar} 
is one of the interesting future's problem. 

This paper is organized as follows. In Section~2, we provide the setting of the Grover walk on the infinite tree 
and introduce inherited and birth parts of the Grover walk. 
Section~3 devotes to obtain an expression of the birth eigenspace by finite energy flows on the infinite tree (Theorem~1). 
In the previous work~\cite{CHKS} (2007), it is shown that one initial state provides localization of the Grover walk on infinite regular tree, 
while the other one does not. 
\segB{Here a graph is called regular, if the degree is constant. }  
We can check that the essential difference with respect to the localization between them is the overlap to such birth eigenfunctions. 
Finally in Section~4, we introduce a quantum walk induced by the Dirichlet random walk as an approximation of the Grover walk on the infinite tree. 
We give the spectral mapping theorem from the Dirichlet random walk (Theorem~4). 
As a result, the birth part of the approximate walk can be described by that of the original Grover walk. 
\section{Grover walk on infinite tree}

\seg{To define the Grover walk on infinite tree $\mathbb{T}=(V,A)$, we prepare two boundary operators
$d_\T,d_O: \ell^2(A)\to \ell^2(V)$,  
	\begin{align} 
        (d_\T\psi)(u) &= \sum_{e: t(e)=u}1/\sqrt{\deg(t(e))}\;\psi(e), \\
        (d_O\psi)(u) &= \sum_{e: o(e)=u}1/\sqrt{\deg(o(e))}\;\psi(e),\; (u\in V) 
        \end{align} 
respectively. }
\seg{\hig{Here} $\deg(u)$ is the degree of $u\in V$. }
Putting $S: \ell^2(A)\to \ell^2(A)$ as a permutation operator denoted by 
	\[ (S\psi)(e)=\psi(\bar{e}),\] 
we have $d_O=d_\T S$. 
Here $\bar{e}$ is the inverse arc of $e$. 
\seg{The adjoint operators are 
	\begin{align}
        (d_\T^*f)(e) &=  1/\sqrt{\deg(t(e))}\; f(t(e)), \\
        (d_O^*f)(e) &=  1/\sqrt{\deg(o(e))}\; f(o(e)), \;(e\in A)
        \end{align}
respectively. }
Remark that 
	\begin{equation}\label{identity}
        d_\T d_\T^*=d_Od_O^*=\bs{1}_\V, 
        \end{equation}
where $\bs{1}_\V$ is the identity operator on $\V\equiv \ell^2(V)$. 
We define the self-adjoint operator $T$ on $\V$ by $T=d_\T d_O^{*}=d_Od_\T^{*}$; for $f\in \V$, 

This is the transition probability operator of the isotropic random walk on $\mathbb{T}$. 
\begin{definition}Grover walk on a connected tree is defined as follows: 
\begin{enumerate}
\item total space: $\A\equiv \ell^2(A)$;
\item time evolution: $U: \ell^2(A)\to \ell^2(A)$ such that 
	\[ U=S(2d_\T^{*}d_\T-\bs{1}_\A). \]
Here $\bs{1}_{\A}$ is the identity operator on $\A$. 
\end{enumerate}
\end{definition}
We read 
	\[ \langle \delta_f,U\delta_e \rangle=\bs{1}_{\{o(f)=t(e)\}}(e,f)\left( \frac{2}{\deg(t(e))}-\delta_{e,\bar{f}} \right) \]
as the amplitude associated with the one-step moving from $e$ to $f$.
We also define an important invariant subspaces $\mathcal{L}\subset \A$ and its orthogonal complement $\mathcal{L}^\bot$; 
we call them the inherited and birth subspaces, respectively. Here
	\[ \mathcal{L}=d_{\T}^*(\V)+d_O^*(\V). \]
\begin{proposition}
It holds that 
        \[ U=U|_\mathcal{L}\oplus U|_{\mathcal{L}^\bot}. \]
\end{proposition}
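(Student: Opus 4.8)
The plan is to show that $\mathcal{L}$ is a \emph{reducing} subspace for the unitary $U$, i.e. that both $\mathcal{L}$ and $\mathcal{L}^\bot$ are $U$-invariant; this is exactly what the claimed orthogonal direct sum $U=U|_\mathcal{L}\oplus U|_{\mathcal{L}^\bot}$ encodes. Since $U$ is unitary, it suffices to establish the two inclusions $U\mathcal{L}\subseteq\mathcal{L}$ and $U^*\mathcal{L}\subseteq\mathcal{L}$; invariance of $\mathcal{L}^\bot$ then follows automatically. Note one cannot get away with $U\mathcal{L}\subseteq\mathcal{L}$ alone, since for a unitary operator on an infinite-dimensional space a forward-invariant closed subspace need not reduce it, so the $U^*$-inclusion is genuinely needed.

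Before computing I would record the algebraic relations that do all the work. From $d_O=d_\T S$ and $S^*=S=S^{-1}$ one gets $S d_\T^*=d_O^*$ and $S d_O^*=d_\T^*$. The identity \eqref{identity}, $d_\T d_\T^*=\bs{1}_\V$, says that $d_\T^* d_\T$ is the orthogonal projection onto $d_\T^*(\V)$, and by symmetry $d_O^* d_O$ projects onto $d_O^*(\V)$. Finally $d_\T d_O^*=T$ by definition of the isotropic random walk. These four relations let me evaluate $U$ and $U^*$ on the two families of generators $d_\T^* f$ and $d_O^* g$ (with $f,g\in\V$), whose ranges together span $\mathcal{L}$.

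The computations are then short. Writing $U=S(2d_\T^* d_\T-\bs{1}_\A)$ and using $d_\T d_\T^*=\bs{1}_\V$ I expect $U d_\T^* f = S d_\T^* f = d_O^* f$, and using $d_\T d_O^*=T$ I expect $U d_O^* g = 2 d_O^* T g - d_\T^* g$; both lie in $\mathcal{L}$. Dually, from $U^*=(2d_\T^* d_\T-\bs{1}_\A)S$ I expect $U^* d_\T^* f = 2 d_\T^* T f - d_O^* f$ and $U^* d_O^* g = d_\T^* g$, again in $\mathcal{L}$. Hence $U$ and $U^*$ both carry the generating set into $\mathcal{L}$, which shows $\mathcal{L}$ reduces $U$ and yields the decomposition.

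The one point needing care — and the main, albeit mild, obstacle — is that $\mathcal{L}=d_\T^*(\V)+d_O^*(\V)$ is written as a sum of two closed subspaces, which in an infinite-dimensional Hilbert space need not itself be closed. The clean fix is to run the invariance argument on the algebraic span of the generators and then invoke continuity (boundedness) of $U$ and $U^*$ to extend the inclusions to the closure $\overline{\mathcal{L}}$; since the orthogonal complement is unaffected by taking closure, the decomposition holds verbatim. I would flag this once and otherwise treat $\mathcal{L}$ as closed.
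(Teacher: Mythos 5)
Your proof is correct and takes essentially the same route as the paper: the paper proves $U(\mathcal{L})=\mathcal{L}$ by computing $U$ on the generators $d_\T^*f+d_O^*g$ and exhibiting explicit preimages, which is exactly your pair of inclusions $U\mathcal{L}\subseteq\mathcal{L}$ and $U^*\mathcal{L}\subseteq\mathcal{L}$ in adjoint form, with identical use of $d_\T d_\T^*=\bs{1}_\V$, $Sd_\T^*=d_O^*$ and $d_\T d_O^*=T$. Your two refinements are welcome but do not change the argument: the sign $d_O^*(f+2Tg)$ you obtain is right (the paper's $f-2Tg$ is a typo), and your caveat that $\mathcal{L}=d_\T^*(\V)+d_O^*(\V)$ need not be closed, fixed by passing to $\overline{\mathcal{L}}$ with $\mathcal{L}^\bot=\overline{\mathcal{L}}^{\,\bot}$, addresses a point the paper silently glosses over.
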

\begin{proof}
It is sufficient to show $U(\mathcal{L})=\mathcal{L}$. 
First we show $U(\mathcal{L})\subset \mathcal{L}$. For any $\psi\in \mathcal{L}$, there exist $f,g\in \V$ such that 
$\psi=d_{\T}^*f+d^*_Og$. By Eq.~(\ref{identity}), we have 
	\[ U\psi=-d_{\T}^* g+d_O^*(f-2Tg), \]
which implies $U\psi \in \mathcal{L}$. Conversely, it is easily checked that for any $f,g\in \V$, 
	\[ d_{\T}^*f+d_O^*g=U(d_{\T}^*(g-2Tf)-d_O^*f)\in U(\mathcal{L}). \]
\end{proof}
In our previous works~\cite{SS,HSS}, we have the general result on the spectrum of a generalized quantum walk on infinite graphs 
including the Grover walk as follows:
\begin{enumerate}
\item the spectrum $\sigma(U_{\mathcal{L}})|$ of $U$ which restricted to a subspace $\mathcal{L}$
coincides with the image of the inverse Joukowski transform of the spectrum $\sigma (T)$ of an underlying random walk:
	\[ \sigma(U|_{\mathcal{L}}) = J^{-1} (\sigma (T)), \]
where $\sigma (\cdot )$ is the set of spectrum and $J(z)=(z+z^{-1})/2$; 
\item the continuous spectrum $\sigma_{c}(U)$ of the induced quantum walk 
is completely derived from the continuous one $\sigma_{c}(T)$ of the underlying random walk: $\sigma_{c}(U)=J^{-1}(\sigma_{c}(T))$. 
\end{enumerate}
Moreover it is well known that the spectrum of the isotropic random walk $T$ on the infinite $\kappa$-regular tree 
consists of the continuous spectrum: 
	\[ \sigma(T)=\sigma_{c}(T)=[ -2\sqrt{\kappa-1}/\kappa, 2\sqrt{\kappa-1}/\kappa]. \] 
Thus after all, we can conclude that the eigenspace of the Grover walk on the infinite regular tree derives from the birth part $\mathcal{L}^{\bot}$. 
In the next section, we present a construction of this eigenspace by investigating a graph structure of the infinite tree; 
in other words, we will construct a kind of combinatorial flow with finite energy on the tree. 
\section{Construction of the eigenspace of the Grover walk on $\mathbb{T}$ }
%
We take a choice of a fixed arbitrary vertex $o$ as the root of $\mathbb{T}$. 
For each $u\in V$, we take the connected infinite subtree $\mathbb{T}^{(u)}=(V^{(u)},A^{(u)})$ which is the induced subgraph of $\mathbb{T}$ by 
	\[ V^{(u)}=\{ v\in V: \mathrm{dist}(o,v)=\mathrm{dist}(o,u)+\mathrm{dist}(u,v) \}; \]
equivalently, $\mathbb{T}^{(u)}$ is the subtree of $\mathbb{T}$ consisting of all the descendants of $u$, 
which is considered as the root of $\mathbb{T}^{(u)}$. See Figure~1. 
Here $\mathrm{dist}(x,y)$ is the usual graph-distance between two vertices $x$ and $y$. 
The set of vertices  $\{ v\in V: \mathrm{dist}(o,v)=n \}$ is often called the $n$-th depth (from $o$ ) of $\mathbb{T}$.
In particular, we define $\mathbb{T}^{(o)}=\mathbb{T}$. 
Let $\{e_0^{(u)},\dots,e_{m-1}^{(u)}\}\subset A^{(u)}$ be the set of arcs $e$ such that $o(e)=u$ and $t(e)\in V^{(u)}$,
where 
	\begin{equation}\label{mu}
        m(u)= \begin{cases}
        	\deg(u) & \text{: $u=o$,} \\
                \deg(u)-1 & \text{: $u\neq o$.}
                \end{cases}
        \end{equation}
Thus $m(u)$ is number of children of $u$. 
\begin{figure}[b]
\begin{center}
  \includegraphics[width=50mm]{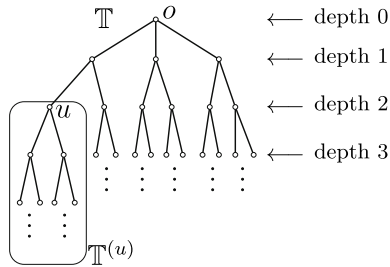}
\end{center}
\caption{$\mathbb{T}$ and $\mathbb{T}^{(u)}$. }
\label{fig:one}
\end{figure}
Now we define important functions by the recurrent way as follows, which corresponds to a kind of combinatorial flow: 
\begin{definition}\label{DefFlow}
\seg{Assume $\deg(v)\geq 2$. }
\seg{Let $\varphi^{(\pm)}: \{ (u,j); \{1,\dots, m(u)-1\} \} \to \mathbb{C}^{\A}$ be defined as follows. \\}
For $e,f\in \{g\in A^{(u)}: \mathrm{dist}(o(g),u)<\mathrm{dist}(t(g),u)\}$, $\varphi^{(\pm)}(u,j):=\varphi_{u,j}^{(\pm)}$ denotes 
	\begin{align}
        \varphi_{u,j}^{(\pm)}(e) &= 
        \begin{cases}
        \omega_j^{k}\seg{/m(o(e))} & \text{: $e=e_k^{(u)}$,\;\;$(k=0,\dots,m(u)-1)$,} \\
        \pm \varphi_{u,j}^{(\pm)}(f)/m(o(e)) & \text{: $t(f)=o(e)$, \; $o(e)\neq u$,} 
        \end{cases} \label{varphi} \\
        \varphi^{(\pm)}_{u,j}(\bar{e}) &= \mp \varphi^{(\pm)}_{u,j}(e). \label{flip} 
        \end{align} 
For $e\notin A^{(u)}$,  $\varphi^{(\pm)}_{u,j}(e)=0$. 
Here $\omega_j=e^{2\pi \im j/m(u)}$ and $\im=\sqrt{-1}$. 
\end{definition}
In the following, $\varphi^{(\pm)}_{u,j}$ denotes $\varphi^{(\pm)}(u,j)$. 
\begin{remark}
\seg{Let $\kappa_0+1$ be the minimum degree of $\mathbb{T}$. }
\seg{For $\kappa_0\geq 2$,} we have $\varphi^{(\pm)}_{u,j}\in \ell^2(A)$ as shown by the following lemma. 
\end{remark}
\begin{lemma}\label{subtreesCONS}
\seg{Assume $\kappa_0\geq 2$. 
We have 
	\[ ||\varphi^{(\epsilon)}_{u,j}||=c_{u,j}, \]
where $c_{u,j}>0$ is a uniformly bounded constant, that is, there exists $c_0<\infty$ such that $\sup_{u,j}c_{u,j}\leq c_0$. }
Moreover $\{ \varphi^{(\epsilon)}_{u,j}: u\in V, j\in{0,\dots,m(u)-1}, \epsilon\in\{\pm\} \}$ are linearly independent. 
In particular, when $\mathbb{T}$ is $\kappa$-regular tree, then they are orthogonal to each other, that is, 
for every $\epsilon,\epsilon'\in\{\pm\}$, $u,u'\in V$, $j\in\{1,\dots,m(u)-1\}$ and $j'\in\{1,\dots,m(u')-1\}$, 
	\[ \left\langle \varphi^{(\epsilon)}_{u,j}, 
        	\varphi^{(\epsilon')}_{u',j'}\right\rangle
        	=\frac{2(\kappa-1)}{m(u)(\kappa-2)} \delta_{\epsilon,\epsilon'}\delta_{u,u'}\delta_{j,j'}.  \] 
\end{lemma}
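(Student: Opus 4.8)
The plan is to establish the three assertions — the uniform norm bound, linear independence, and (for regular trees) orthogonality with the stated value — in that order, using throughout the explicit ``geodesic product'' form of the flow. Writing the geodesic from $u$ to the origin $o(e)$ of a forward arc $e$ (one with $\mathrm{dist}(o(e),u)<\mathrm{dist}(t(e),u)$) as $u=w_0,w_1,\dots,w_d=o(e)$, and letting $e^{(u)}_{k(e)}$ be its first step, the recursion of Definition~\ref{DefFlow} unwinds to
$$\varphi^{(\epsilon)}_{u,j}(e)=\epsilon^{\,d}\,\frac{\omega_j^{k(e)}}{\prod_{i=0}^{d}m(w_i)},\qquad \varphi^{(\epsilon)}_{u,j}(\bar e)=-\epsilon\,\varphi^{(\epsilon)}_{u,j}(e)\quad(\epsilon\in\{\pm1\}),$$
from which $|\varphi^{(\epsilon)}_{u,j}(e)|$ is manifestly independent of $j$ and $\epsilon$; in particular $c_{u,j}$ depends on neither.

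For the norm, I assign to each $w\in V^{(u)}\setminus\{u\}$ the modulus $a(w)$ of $\varphi^{(\epsilon)}_{u,j}$ on the arc from the parent of $w$ into $w$. Since every forward arc is the ``parent arc'' of a unique vertex, counting forward arcs and their reverses gives $\|\varphi^{(\epsilon)}_{u,j}\|^2=2\sum_{w\in V^{(u)}\setminus\{u\}}a(w)^2$. With $\Sigma_d$ the sum of $a(w)^2$ over the $d$-th depth of $\mathbb{T}^{(u)}$, the relation $a(w')=a(w)/m(w)$ for a child $w'$ of $w$ yields $\Sigma_1=1/m(u)$ and $\Sigma_{d+1}=\sum_{w}a(w)^2/m(w)\le \Sigma_d/\kappa_0$, because $m(w)\ge\kappa_0\ge2$. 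Summing the geometric bound gives $\|\varphi^{(\epsilon)}_{u,j}\|^2\le \tfrac{2}{m(u)}\cdot\tfrac{\kappa_0}{\kappa_0-1}\le \tfrac{2}{\kappa_0-1}$, which is positive and uniformly bounded, establishing $0<c_{u,j}\le c_0$.

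For linear independence I first split off the sign $\epsilon$ using $S$: since $S\varphi^{(\epsilon)}_{u,j}=-\epsilon\,\varphi^{(\epsilon)}_{u,j}$, the $\varphi^{(+)}$ and $\varphi^{(-)}$ lie in orthogonal $S$-eigenspaces, so any finite vanishing combination splits into a vanishing $\varphi^{(+)}$-combination and a vanishing $\varphi^{(-)}$-combination. Fix $\epsilon$ and suppose $\sum_{u,j}c_{u,j}\varphi^{(\epsilon)}_{u,j}=0$. Evaluating at the arcs $e^{(w)}_0,\dots,e^{(w)}_{m(w)-1}$ leaving a fixed vertex $w$, I note that any $\varphi^{(\epsilon)}_{u,j}$ with $u$ a \emph{strict} ancestor of $w$ is constant in the child index there (its value is $-\epsilon$ times the single incoming value at $w$, divided by $m(w)$), those with $u$ unrelated to $w$ vanish, while $\varphi^{(\epsilon)}_{w,j}(e^{(w)}_k)=\omega_j^k/m(w)$ supplies exactly the nonconstant Fourier modes $j=1,\dots,m(w)-1$ on $\mathbb{Z}/m(w)$. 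Hence $\tfrac{1}{m(w)}\sum_j c_{w,j}\omega_j^k$ is constant in $k$, and uniqueness of Fourier coefficients forces $c_{w,j}=0$ for all $j$; as $w$ is arbitrary, all coefficients vanish.

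For orthogonality in the $\kappa$-regular case, $\delta_{\epsilon\epsilon'}$ is again $S$-eigenspace orthogonality, and the diagonal value follows from the norm computation with $m(w)=\kappa-1$ below $u$, giving $\Sigma_{d+1}=\Sigma_d/(\kappa-1)$ and the exact sum $\tfrac{2(\kappa-1)}{m(u)(\kappa-2)}$. The off-diagonal vanishing I obtain by grouping arcs by their first step from the higher root: if $u,u'$ are incomparable the supports $A^{(u)},A^{(u')}$ are disjoint; if $u=u'$ with $j\ne j'$, each first-step group contributes $\omega_j^k\overline{\omega_{j'}^k}$ times a $k$-independent quantity (by regularity), leaving the factor $\sum_{k=0}^{m(u)-1}e^{2\pi\im(j-j')k/m(u)}=0$. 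The remaining case, $u$ a strict ancestor of $u'$, is the \emph{main obstacle}: the supports overlap on $A^{(u')}$, and here I would use that $\varphi^{(\epsilon)}_{u,j}$ is symmetric across the isomorphic sub-subtrees hanging at the children of $u'$ (its values depending only on the single incoming value at $u'$), while $\varphi^{(\epsilon)}_{u',j'}$ acquires the phase $\omega_{j'}^{\,l}$ on the $l$-th such sub-subtree; grouping by $l$ produces the common factor $\sum_{l=0}^{m(u')-1}\overline{\omega_{j'}^{\,l}}=0$. Regularity enters decisively through the isomorphism of these child sub-subtrees, which makes the per-group contribution independent of $l$; this is precisely the symmetry that fails for non-regular trees, explaining why orthogonality is claimed only in the regular case.
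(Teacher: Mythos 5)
Your proposal is correct, and two of its three parts track the paper's own computations closely: the norm bound is the same geometric-decay-over-depth-shells argument (your constant $\tfrac{2}{m(u)}\cdot\tfrac{\kappa_0}{\kappa_0-1}$ is in fact tight, agreeing with the exact regular-tree value, whereas the paper's bookkeeping of the first shell in its geometric sum is slightly off), and the regular-tree orthogonality is the same character-sum argument, grouping arcs by child subtree and using $\sum_k \omega^k=0$ off the diagonal and the geometric series on the diagonal. Where you genuinely depart from the paper is the linear-independence step. The paper argues by contradiction: it assumes $\varphi^{(+)}_{u,j}$ lies in the span of the others and eliminates candidates in stages --- $\epsilon$-mismatch via the $S$-eigenspaces $\mathcal{H}^{(\pm)}$, unrelated vertices via disjoint supports, descendants via a support-comparison near $u$, ancestors via a recursion that starts at the root $o$ (values on arcs incident to $o$ can come only from the $\varphi_{o,j'}$) and walks down the path to $u$, and finally the same-vertex case via Fourier orthogonality. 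You instead take an arbitrary finite vanishing combination and test it on the outgoing star $\{e^{(w)}_0,\dots,e^{(w)}_{m(w)-1}\}$ of each vertex $w$: ancestor flows are constant in the child index $k$ there, descendants and incomparable vertices contribute zero, and the $w$-flows contribute exactly the nonconstant characters of $\mathbb{Z}/m(w)$, so uniqueness of Fourier coefficients forces $c_{w,j}=0$ for every $w$ simultaneously. This is a cleaner and more symmetric route: it dispenses with the contradiction setup, the case analysis, and the root-path recursion, and your closed product formula $\varphi^{(\epsilon)}_{u,j}(e)=\epsilon^{d}\,\omega_j^{k(e)}/\prod_{i}m(w_i)$ makes all three computations transparent; what the paper's staged elimination buys in exchange is an explicit picture of exactly which flows can overlap on which arcs, which it reuses in the proof of Theorem~1. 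One cosmetic slip on your side: the value of an ancestor's flow on $e^{(w)}_k$ is $+\epsilon$ times the value on the arc entering $w$ from its parent, divided by $m(w)$, not $-\epsilon$ times it; since your argument only uses independence of $k$, nothing breaks.
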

\begin{proof}
Let $\mathcal{H}^{(\pm)}$ be eigenspaces of eigenvalues $\{\pm 1\}$ of $S$, respectively, that is, 
	\begin{equation}
        \mathcal{H}^{(\pm)}=\{ \psi\in \A: \psi(\bar{e})=\pm \psi(e)\;\;(e\in A) \}. 
        \end{equation}
First we show that the linear independence. 
\segB{Assume that $\varphi^{(+)}_{u,j}$ is expressed by a linear combination of $\varphi_{u',j'}^{(\pm)}$'s with 
$u'\in V,\;j'\in\{1,\dots,m(u')-1\}$: 
	\begin{equation}\label{LinComb}
        \varphi^{(+)}_{u,j}=\sum_{(u',j')\neq (u,j)} r^{(+)}_{u',j'}\varphi^{(+)}_{u',j'} + \sum_{(u',j')} r^{(-)}_{u',j'}\varphi^{(-)}_{u',j'}.
        \end{equation}
We will show the contradiction step by step as follows: 
}
\begin{enumerate}
\item 
From the definition, we have $\varphi^{(+)}_{u,j}\in \mathcal{H}^{(-)}$ and $\varphi^{(-)}_{u',j'}\in \mathcal{H}^{(+)}$. 
Thus $\varphi^{(+)}_{u,j}\bot \varphi^{(-)}_{u',j'}$. 
\segB{Therefore, any $\varphi^{(-)}_{u',j'}$'s with $u'\in V,\;j'\in\{1,\dots,m(u')-1\}$ are not contained in the linear combination (\ref{LinComb}), that is, 
$r_{u',j'}^{(-)}=0$ for every $u'\in V,\;j'\in \{1,\dots,m(u')-1\}$. }
\item 
Note that 
if $V^{(u)}\cap V^{(u')}=\emptyset$, then $\varphi^{(\epsilon)}_{u,j}\bot \varphi^{(\epsilon)}_{u',j'}$ holds since their supports are disjoint, 
where the support of $\psi\in \A$ is denoted by $\mathrm{supp}(\psi)=\{a\in A : \psi(a)\neq 0\}$. 
\segB{Therefore, any $\varphi^{(+)}_{u',j'}$'s with $V^{(u)}\cap V^{(u')}=\emptyset$ are not contained in the linear combination (\ref{LinComb}), that is, 
$r_{u',j'}^{(+)}=0$ for every $u'$ with $V^{(u)}\cap V^{(u')}=\emptyset$, $j'\in\{1,\dots,m(u')-1\}$. }
\item 
Remark that $V^{(u)}\cap V^{(u')}\neq \emptyset$
if and only if $\mathbb{T}^{(u)}\supseteq \mathbb{T}^{(u')}$ or $\mathbb{T}^{(u)}\subseteq \mathbb{T}^{(u')}$. 
If $\mathbb{T}^{(u)}\supsetneq \mathbb{T}^{(u')}$, then 
\segB{
\[ \mathrm{supp}(\varphi_{u,j}) \setminus \mathrm{supp}(\varphi_{u',j'})\supset \{a\in A^{(u)} : t(a)=u,\;\mathrm{or}\;o(a)=u \}. \]
Therefore, $\varphi^{(+)}_{u,j}$ cannot be described by the linear combination only using
	\[ \{\varphi_{u',j'}^{(+)}: u'\in V^{(u)}\setminus\{u\},\;j'\in\{1,\dots,m(u')-1\}\}. \] }
\item 
\segB{The shortest path from $o$ to $u$ is denoted by $(e_1,e_2,\dots,e_n)$ with $o(e_1)=o$ and $t(e_n)=u$. 
By (1)-(3), we have 
	\[ \varphi_{u,j}^{(+)}=\sum_{(u',j')\in Q} r_{u',j'}^{(+)}\varphi_{u',j'}^{(+)} \]
Here 	
	\[ Q= \{(u',j') \;:\; u'\in \{o,t(e_1),\dots,t(e_{n-1})\} \cup V^{(u)},\;j'\in\{1,\dots,m(u')-1\}  \}. \]
Assume that $\varphi_{o,j'}^{(+)}$ is contained in the linear combination. 
However, the values of the linear combination function on all arcs $e$ with $o(e)=o$ or $t(e)=o$ come from only $\varphi_{o,j'}^{(+)}$. 
Therefore any $\varphi_{o,j'}^{(+)}$ $(j'\in\{1,\dots,m(o)-1\})$ are not contained in the linear combination, that is, $r_{o,j'}$'s are $0$. 
By taking the same argument recursively, we conclude that any 
$\varphi_{u',j'}^{(+)}$'s with $u'\in\{o,t(e_1),\dots,t(e_{n-1})\},\;j'\in \{1,\dots,m(u')-1\}$ are not contained in the linear combination (\ref{LinComb}).
Thus 
	\[ \varphi_{u,j}^{(+)}
        =\sum_{j'\neq j} r_{u,j'}^{(+)}\varphi_{u,j'}^{(+)}
        +\sum_{{\scriptsize \begin{matrix}u' \in V^{(u)}\setminus\{u\} \\ j' \in\{1,\dots,m(u')-1\}\end{matrix}}} r_{u',j'}^{(+)}\varphi_{u',j'}^{(+)}. \] 
}  
\item For $u=u'$, 
since ${}^T[1,w_j,\dots,w_j^{m(u')-1}]$ and ${}^T[1,w_{j'},\dots,w_{j'}^{m(u')-1}]$ are orthogonal to each other $(j\neq j')$ with respect to
$\mathbb{C}^{m(u')}$-inner product, then $\varphi_{u,j}^{(+)}$ and $\varphi_{u,j'}^{(+)}$ are linearly independent. 
\segB{Therefore $r_{u,j'}$'s with $j'\in\{1,\dots,m(u)-1\}\setminus \{j\}$ are $0$ which implies the contradiction to the statement of (3). }
\end{enumerate}
Then we have shown the linearly independence of $\{\varphi_{u,j}^{(\pm)}:u\in V,j\in\{1,\dots,m(u)-1\}\}$. 


Next we show if $k_0\geq 2$, then $\varphi:=\varphi_{u,j}^{(\epsilon)}\in \A$. 
Let $\partial A^{(u)}_{n}$ be the set of $n'$-th depth of arcs in $\mathbb{T}^{(u)}$, that is, 
	\[ \partial A^{(u)}_{n}=\{e\in A^{(u)}: \mathrm{dist}(u,o(e))=n, \;\mathrm{dist}(u,t(e))=n+1\}. \]
We have $|| \varphi|_{\partial A^{(u)}_0} ||^2= 2/m(u)\leq 2/k_0$, and 
	\[ ||\; \varphi|_{\partial A^{(u)}_1}\; ||^2= 2\times \sum_{k=0}^{m(u)-1}\frac{1}{m^2(u)m^2(t(e^{(u)}_k))}\leq 2/k_0^3.  \]
Here for any subset $A'\subset A$ and $\psi\in\A$, $\psi_{A'}$ denotes 
	\[ (\psi|_{A'})(e)=\begin{cases} \psi(e) & \text{: $e\in A'$} \\ 0 & \text{: $e\notin A'$.}  \end{cases} \]
We set $(u,v_{1}^{(w)},\dots,v_{n}^{(w)},w)$ as the shortest path between $u$ and $w$ for $w\in A^{(u)}$. 
Now we show that $||\; \varphi|_{\partial A^{(u)}_{s}}\;||^2\leq 2 /k_0^{s+2}$ by induction with respect to $s$: 
	\begin{align*} 
        ||\varphi|_{\partial A^{(u)}_{s+1}}||^2 &= \sum_{x\in t(\partial A^{(u)}_{s+1})}\prod_{i=1}^{s+1}\frac{1}{m^2(u)m^2(v_{i}^{(x)})} \\
        	&= \sum_{y\in t(\partial A^{(u)}_{s})}\frac{1}{m(y)}\prod_{i=1}^{s}\frac{1}{m^2(u)m^2(v_{i}^{(y)})} \\
                &\leq \frac{1}{k_0}\sum_{y\in t(\partial A^{(u)}_{s})}\prod_{i=1}^{s}\frac{1}{m^2(u)m^2(v_{i}^{(y)})}
                =\frac{1}{k_0}||\varphi|_{\partial A^{(u)}_{s}}||^2\leq \frac{1}{k_0} \times  \frac{2}{k_0^{s+2}}
        \end{align*}
Then we have 
	\begin{align}
        || \varphi ||^2\leq \sum_{s=0}^{\infty}  \frac{2}{k_0^{s+2}} = \frac{2}{k_0(k_0-1)}<\infty.
        \end{align}

Finally, under the assumption that $\mathbb{T}$ is the $\kappa$-regular tree, we prove the orthogonality. 
We put $\varphi^{(\epsilon)}_{u,j}|_{A^{(u')}}=:\phi$ and 
$\varphi^{(\epsilon)}_{u',j'}=:\varphi'$. 
Remark that 
	\[ \langle\varphi^{(\epsilon)}_{u,j}, \varphi^{(\epsilon)}_{u',j'}\rangle=\langle \phi,\varphi' \rangle. \]
Since the orthogonality of $\epsilon\neq \epsilon'$ case has been already shown, so we consider $\epsilon= \epsilon'$ case. 
We should remark that for any $e\in \partial A_{n'}^{(u')}$ there uniquely exists $k\in \{0,\dots,m(u)-1\}$ such that 
	\begin{equation}\label{01}
        \seg{ \phi(e)=\frac{\omega_j^k}{m(u)(\kappa-1)^{n+n'}}. }
        \end{equation}
\seg{By (\ref{01}), then if $u\neq u'$, } 
	\begin{equation}\label{a0}
        \sum_{e\in \partial A^{(u')}_{n'}} \varphi'(e)\overline{\phi(e)} 
        = \frac{\omega_j^{-k}}{m(u)(\kappa-1)^{n+n'}}\sum_{e\in \partial A^{(u')}_{n'}} \varphi'(e)=0
        \end{equation}
Here the fact $\sum_{k=0}^{m(v)-1} w_{j'}^{k}=0$ leads the second equality. 
If $u=u'$ case, we have 
	\begin{equation}\label{a1}
        \sum_{e\in \partial A^{(u')}_{n'}} \varphi'(e)\overline{\phi(e)}
        	= \left( \frac{1}{\kappa-1} \right)^{2n'} \sum_{k=0}^{m(u')-1} \frac{w_{j'}^{k}\bar{w}_{j}^{k}}{m^2(u')}  
                	= \frac{\delta_{j'j}}{m(u')}\left(\frac{1}{\kappa-1}\right)^{2n'}. 
        \end{equation}
Therefore (\ref{a0}) and (\ref{a1}) imply if $j \neq j'$, then 
	\begin{equation}\label{02}
        \sum_{e\in \partial A^{(u')}_{n'}}\varphi'(e)\overline{\phi(e)}=\sum_{e\in \partial A^{(u')}_{n'}}\overline{\phi(\bar{e})}\varphi'(\bar{e})=0,\;\;(n'\geq 0). 
        \end{equation}
The second equality derives from (\ref{flip}). 
Then if $\phi\neq \varphi'$, we have  
	\begin{equation}
        \langle \phi,\varphi' \rangle
        	= \sum_{n'=0}^{\infty} \sum_{e\in \partial A^{(u')}_{n'}}(\overline{\phi(e)}\varphi'(e)+\overline{\phi(\bar{e})}\varphi'(\bar{e}))=0.
        \end{equation}
On the other hand, if $\phi=\varphi'$, that is, $\varphi_{u,j}^{(\epsilon)}=\varphi_{u',j'}^{(\epsilon)}$ then from (\ref{a1}), 
	\begin{equation}
        ||\varphi'||^2=2\sum_{n'=0}^{\infty}\sum_{e\in \partial A^{(u')}_{n'}}\frac{1}{m(u')}\left(\frac{1}{\kappa-1}\right)^{2n'}=\frac{2(\kappa-1)}{m(u')(\kappa-2)}.
        \end{equation}        
This completes the proof.
\end{proof}

We set the following subspace of $\A$ spanned by ${\varphi^{(\pm)}_{u,j}}$' as 
	\begin{equation}\label{flowES} 
        \mathcal{F}^{(\pm)}=\spann\left\{ \varphi^{(\pm)}_{u,j}: u\in V, \;j\in \{1,\dots,m(u)-1\} \right\}, 
        \end{equation}
which is the finite span of $\{\varphi^{(\pm)}_{u,j}\}$'s, that is,  
	\[ \mathcal{F}^{(\pm)}=
        \left\{\sum_{u\in V'}\sum_{j=1}^{m(u)-1}c_{u,j}\varphi^{(\pm)}_{u,j}: c_{u,j}\in \mathbb{C},\;|V'|<\infty\right\}. \]
By Lemma~\ref{subtreesCONS}, 
$\left\{ \varphi^{(\pm)}_{u,j}/\seg{\sqrt{c_0}}\;:\; u\in V,\;j\in\{1,\dots, m(u)-1\} \right\}$ is a complete system of 
$\overline{\mathcal{F}^{(\pm)}}$. In particular, if $\mathbb{T}$ is $\kappa$-regular tree with $\kappa\geq 3$, then 
it becomes a complete orthogonal normalized system. 
\seg{Here for $\mathcal{K}\subset \A$, $\overline{\mathcal{K}}$ is the closed linear span of $\mathcal{K}$ such that
	\[ \{ \psi\in\A : \forall \epsilon>0,\; \exists \varphi\in \mathcal{K}\;s.t.,\; ||\psi-\varphi||<\epsilon \}, \]
where $||\cdot||$ is the $\ell^2$-norm. }
%
\begin{lemma}
	\begin{equation}\label{1005}
        \ker(d_\T)\cap \mathcal{H}^{(\pm)}=\ker(d_O)\cap \mathcal{H}^{(\pm)}
        \end{equation}
        and 
	\begin{equation}
        \mathcal{L}^\bot = \left(\ker(d_O)\cap \mathcal{H}^{(+)}\right) \oplus \left(\ker(d_O)\cap \mathcal{H}^{(-)}\right). 
        \end{equation}
	Under this decomposition we have 
        \begin{equation}\label{ibu}
        \seg{U|_{\mathcal{L}^\perp}= -1 \oplus 1. }
        \end{equation}
\end{lemma}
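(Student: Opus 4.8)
The plan is to reduce everything to the algebraic relations already recorded in the excerpt, namely $d_O=d_\T S$, the self-adjoint involution property $S=S^*$ with $S^2=\bs{1}_\A$, and $d_\T d_\T^*=d_O d_O^*=\bs{1}_\V$ from~(\ref{identity}). The first thing I would write down is the identification $\mathcal{L}^\perp=\ker(d_\T)\cap\ker(d_O)$. Since $\mathcal{L}=d_\T^*(\V)+d_O^*(\V)$, the orthogonal complement of a sum is the intersection of the complements, and the orthogonal complement of the range of an adjoint is the kernel of the operator, so $\mathcal{L}^\perp=(d_\T^*(\V))^\perp\cap(d_O^*(\V))^\perp=\ker(d_\T)\cap\ker(d_O)$. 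This holds irrespective of whether $\mathcal{L}$ is closed.

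For the equality~(\ref{1005}) I would exploit that $S$ acts as $\pm 1$ on $\mathcal{H}^{(\pm)}$. For $\psi\in\mathcal{H}^{(\pm)}$ one computes $d_O\psi=d_\T S\psi=\pm d_\T\psi$, and symmetrically, using $S^2=\bs{1}_\A$, $d_\T\psi=d_O S\psi=\pm d_O\psi$. Hence on $\mathcal{H}^{(\pm)}$ one has $d_\T\psi=0$ if and only if $d_O\psi=0$, which is exactly $\ker(d_\T)\cap\mathcal{H}^{(\pm)}=\ker(d_O)\cap\mathcal{H}^{(\pm)}$.

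For the direct-sum decomposition, the step requiring the most care is establishing that $\mathcal{L}^\perp$ splits along the eigenspaces of $S$; this needs $S$-invariance. I would first show $S(\mathcal{L})=\mathcal{L}$: from $d_O^*=(d_\T S)^*=S d_\T^*$ I get $S d_\T^*=d_O^*$ and $S d_O^*=d_\T^*$, so $S$ permutes the two summands of $\mathcal{L}$ and fixes $\mathcal{L}$; since $S$ is unitary, $S(\mathcal{L}^\perp)=\mathcal{L}^\perp$. Because $\A=\mathcal{H}^{(+)}\oplus\mathcal{H}^{(-)}$ is the spectral decomposition of the self-adjoint involution $S$, the $S$-invariant subspace $\mathcal{L}^\perp$ splits as $\mathcal{L}^\perp=(\mathcal{L}^\perp\cap\mathcal{H}^{(+)})\oplus(\mathcal{L}^\perp\cap\mathcal{H}^{(-)})$. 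Finally, $\mathcal{L}^\perp\cap\mathcal{H}^{(\pm)}=\ker(d_\T)\cap\ker(d_O)\cap\mathcal{H}^{(\pm)}=\ker(d_O)\cap\mathcal{H}^{(\pm)}$ by~(\ref{1005}), which yields the asserted decomposition.

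The computation of $U|_{\mathcal{L}^\perp}$ is then immediate: for $\psi\in\mathcal{L}^\perp$ we have $d_\T\psi=0$, so $U\psi=S(2d_\T^*d_\T\psi-\psi)=-S\psi$, which equals $-\psi$ on $\mathcal{H}^{(+)}$ and $+\psi$ on $\mathcal{H}^{(-)}$. Under the decomposition above this is precisely $U|_{\mathcal{L}^\perp}=-1\oplus 1$, giving~(\ref{ibu}). I do not expect a genuine obstacle here; the only points demanding attention are the $S$-invariance argument that licenses the splitting, and the bookkeeping that the factor $-S$ combines with the eigenvalue $+1$ on $\mathcal{H}^{(+)}$ to produce the leading $-1$, matching the stated ordering.
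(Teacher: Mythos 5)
Your proposal is correct and follows essentially the same route as the paper: the same computation $d_O\psi=d_\T S\psi=\pm d_\T\psi$ for~(\ref{1005}), the identification $\mathcal{L}^\perp=\ker(d_\T)\cap\ker(d_O)$, and the observation that $U\psi=-S\psi$ on $\ker(d_\T)$ to get $-1\oplus 1$. The only difference is that you explicitly justify the splitting of $\mathcal{L}^\perp$ along the eigenspaces of $S$ via $S$-invariance of $\mathcal{L}$ (hence of $\mathcal{L}^\perp$), a point the paper's proof uses tacitly; this is a welcome tightening, not a different argument.
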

\begin{proof}
For any $\psi\in \mathcal{H}^{(\pm)}$, we have $d_O\psi=d_\T S\psi=\pm d_T\psi$, which implies 
$\ker(d_\T)\cap \mathcal{H}^{(\pm)}=\ker(d_O)\cap \mathcal{H}^{(\pm)}$. 
Since $\mathcal{L}^\bot=\ker(d_\T)\cap \ker(d_O)$ and $\mathcal{H}^{(+)}\oplus \mathcal{H}^{(-)}=\A$, 
	\begin{align*}
        \mathcal{L}^\bot &= \left(\ker(d_\T)\cap\ker(d_O)\cap \mathcal{H}^{(+)}\right) \oplus \left(\ker(d_O)\cap \ker(d_\T)\cap \mathcal{H}^{(-)}\right), \\
        	&=\left(\ker(d_O)\cap\mathcal{H}^{(+)}\right) \oplus \left(\ker(d_O)\cap \mathcal{H}^{(-)}\right)
        \end{align*}
which follows from (\ref{1005}). 
For any $\psi\in\ker d_{\T}$, it holds that $U\psi=SC\psi-S\psi$, Then we immediately obtain (\ref{ibu}). 
\end{proof}
It follows from Lemma~1 that $\mathcal{F}^{(+)}$ and $\mathcal{F}^{(-)}$ are orthogonal. 
\begin{theorem}\label{thm1}
Assume $\mathbb{T}$ is a tree with $\kappa_0\geq 2$. 
Then $\{\pm 1\}\subset \sigma(U|_{\mathcal{L}^\perp})$, which are eigenvalues. 
Moreover the birth eigenspace $\mathcal{L}^\perp$ is the infinite dimensional subspace of $\ell^2(A)$ which can be expressed as 
	\begin{align}
        \mathcal{L}^\bot 
        	= \overline{\mathcal{F}^{(+)}}\oplus \overline{\mathcal{F}^{(-)}}. 
	\end{align}
Here $\mathcal{F}^{(\pm)}$ is (\ref{flowES}). 
\end{theorem}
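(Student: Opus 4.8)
The plan is to identify $\overline{\mathcal{F}^{(+)}}$ and $\overline{\mathcal{F}^{(-)}}$ with the two summands of $\mathcal{L}^\perp$ furnished by the preceding lemma, namely $\mathcal{L}^\perp=(\Ker(d_O)\cap\mathcal{H}^{(+)})\oplus(\Ker(d_O)\cap\mathcal{H}^{(-)})$ with $U|_{\mathcal{L}^\perp}=-1\oplus 1$. Concretely I will prove
\[ \overline{\mathcal{F}^{(+)}}=\Ker(d_O)\cap\mathcal{H}^{(-)},\qquad \overline{\mathcal{F}^{(-)}}=\Ker(d_O)\cap\mathcal{H}^{(+)}. \]
Once this is established the remaining assertions are immediate: since $\mathcal{H}^{(+)}\perp\mathcal{H}^{(-)}$ the sum is orthogonal, giving $\mathcal{L}^\perp=\overline{\mathcal{F}^{(+)}}\oplus\overline{\mathcal{F}^{(-)}}$; as the minimum degree $\kappa_0+1\ge 3$ forces $m(u)\ge 2$ at every vertex, each $\varphi^{(\pm)}_{u,1}$ is a nonzero element of the respective summand, so both $\pm1$ are genuine eigenvalues of $U|_{\mathcal{L}^\perp}$; and the linear independence from Lemma~\ref{subtreesCONS} together with $|V|=\infty$ forces infinite dimension.

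For the inclusions $\overline{\mathcal{F}^{(\pm)}}\subseteq \Ker(d_O)\cap\mathcal{H}^{(\mp)}$ I would first note that (\ref{flip}) gives $\varphi^{(+)}_{u,j}\in\mathcal{H}^{(-)}$ and $\varphi^{(-)}_{u,j}\in\mathcal{H}^{(+)}$, and then verify the conservation law $d_O\varphi^{(\pm)}_{u,j}=0$ vertex by vertex. Outside $V^{(u)}$ there is nothing to check; at the root $u$ the only nonzero outgoing values are $\omega_j^{k}/m(u)$ and $\sum_{k=0}^{m(u)-1}\omega_j^{k}=0$ because $j\not\equiv 0$; at an interior $w\in V^{(u)}\setminus\{u\}$ the single value on the incoming reversed arc is $\mp\varphi(f)$, while the $m(w)$ outgoing values each equal $\pm\varphi(f)/m(w)$, and these cancel. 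As $\Ker(d_O)\cap\mathcal{H}^{(\mp)}$ is closed, the closures land inside.

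The substantial step, and the main obstacle, is the reverse inclusion, for which I would show that the orthogonal complement of $\mathcal{F}^{(+)}$ inside $W:=\Ker(d_O)\cap\mathcal{H}^{(-)}$ is trivial (the $\mathcal{F}^{(-)}$ case being identical). Represent $\psi\in W$ by its flow $x_e:=\psi(e)$ on downward arcs; antisymmetry gives $\psi(\bar e)=-x_e$, and $d_O\psi=0$ becomes the Kirchhoff law $x_{e_v}=\sum_{w\ \mathrm{child\ of}\ v}x_{e_w}$ at every $v\neq o$ (and $\sum_{w}x_{e_w}=0$ at $o$). The crucial structural fact is that on the branch $k$ below $u$ one has $\varphi^{(+)}_{u,j}(e)=\omega_j^{k}\,g^{(u)}_e$, where $g^{(u)}_e>0$ is a product of factors $1/m(\cdot)$ independent of $j$; folding the pairing onto downward arcs therefore yields
\[ \langle\psi,\varphi^{(+)}_{u,j}\rangle=2\sum_{k=0}^{m(u)-1}\bar\omega_j^{\,k}\,C_k^{(u)},\qquad C_k^{(u)}:=\sum_{e\ \mathrm{in\ branch}\ k}x_e\,g^{(u)}_e, \]
the sums converging absolutely by Cauchy--Schwarz and the geometric bound $\sum_e (g^{(u)}_e)^2<\infty$ coming from Lemma~\ref{subtreesCONS} (this is exactly where $\kappa_0\ge2$ enters). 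Vanishing of these pairings for all $j\in\{1,\dots,m(u)-1\}$ means $(C_0^{(u)},\dots,C_{m(u)-1}^{(u)})$ is orthogonal to every nontrivial character, i.e.\ all $C_k^{(u)}$ coincide.

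I would then convert this into a statement about the single quantity $D^{(v)}:=\sum_{e\in A^{(v)}}x_e\,g^{(v)}_e$. The nesting $g^{(v)}_e=\tfrac1{m(v)}\,g^{(w_k)}_e$ on the subtree below a child $w_k$ gives $C_k^{(v)}=D^{(w_k)}/m(v)$, so ``all $C_k^{(v)}$ equal'' becomes ``$D^{(w)}$ is the same for all children $w$ of $v$''; combined with the telescoping identity $D^{(v)}=\tfrac1{m(v)}\big(x_{e_v}+\sum_{w}D^{(w)}\big)$ this forces $D^{(v)}$ to be constant along every edge, hence constant over the connected tree, whence $x_{e_v}=0$ for every $v\neq o$. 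The root relation closes off the remaining edges, so $x\equiv0$ and $\psi=0$; this kills the orthogonal complement and yields $\overline{\mathcal{F}^{(+)}}=W$, completing the identification. I expect the only genuinely delicate points to be the bookkeeping in the reduction $C_k^{(v)}=D^{(w_k)}/m(v)$ and the absolute-convergence justification for $D^{(v)}$.
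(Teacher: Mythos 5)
Your overall strategy is legitimate and genuinely different from the paper's: the paper proves $\Ker(d_O)\cap\mathcal{H}^{(-)}\subset\overline{\mathcal{F}^{(+)}}$ \emph{constructively}, decomposing $\mathbb{T}_\psi$ into connected components and building explicit finite combinations $\phi_n$ that agree with $\psi$ on all arcs up to depth $n$, then estimating the tail; you instead argue by duality, showing that the orthogonal complement of $\mathcal{F}^{(+)}$ inside $W=\Ker(d_O)\cap\mathcal{H}^{(-)}$ is trivial. That route can be made to work, and it even avoids the paper's reduction to connected $\mathbb{T}_\psi$. However, as written your argument has a bookkeeping error and, more importantly, a genuine gap at the decisive step. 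The bookkeeping error: branch $k$ contains the arc $e_k^{(v)}$ itself, on which $\varphi^{(+)}_{v,j}$ takes the value $\omega_j^k/m(v)$, so the correct reduction is $C_k^{(v)}=\bigl(x_{e_k^{(v)}}+D^{(w_k)}\bigr)/m(v)$, not $D^{(w_k)}/m(v)$. Hence orthogonality to all nontrivial characters at $v$ says that $E^{(w)}:=x_{e_w}+D^{(w)}$ (not $D^{(w)}$) is constant over the children $w$ of $v$; combining this with $D^{(v)}=\frac{1}{m(v)}\sum_k E^{(w_k)}$ (which holds by definition of $D^{(v)}$) gives $D^{(v)}=E^{(w)}$, i.e.\ the relation $D^{(v)}-D^{(w)}=x_{e_w}$ along every parent--child edge.

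The genuine gap is your final deduction that the identities ``force $D^{(v)}$ to be constant along every edge, hence constant over the tree, whence $x\equiv 0$.'' This is circular: what the (corrected) identities actually say is that constancy of $D$ along edges is \emph{equivalent} to $x=0$, and no purely local relation you have written down rules out nonzero solutions. Indeed, substituting $x_{e_w}=D^{(v)}-D^{(w)}$ into Kirchhoff's law shows exactly that $D$ is a harmonic function on $\mathbb{T}$ (including at $o$, by the root relation), and the gradient of any nonconstant harmonic function --- e.g.\ a Poisson kernel on the regular tree --- satisfies all of your local identities; such examples are excluded only because they fail to be $\ell^2$. So a global input is indispensable. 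It is available: by Cauchy--Schwarz, $|D^{(v)}|\leq \|\psi|_{A^{(v)}}\|\cdot\|g^{(v)}\|$, the factors $\|g^{(v)}\|$ are uniformly bounded by Lemma~\ref{subtreesCONS}, and since the subtrees at a fixed depth are disjoint and $\psi\in\ell^2(A)$, one gets $\sup_{\dist(o,v)=n}|D^{(v)}|\to 0$ as $n\to\infty$. Then the maximum principle for the harmonic function $D$ on the balls of radius $n$ (or, equivalently, a telescoping of $D^{(v)}-D^{(w)}=x_{e_w}$ along rays together with the root condition, or a summation-by-parts identity) yields $D\equiv 0$, hence $x\equiv 0$ and $\psi=0$. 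With this decay-plus-maximum-principle step inserted, your proof closes; without it, the conclusion does not follow from what you wrote.
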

\begin{remark}
If $\mathbb{T}=\mathbb{Z}$, then $\mathcal{L}^\perp = \emptyset$. 
\end{remark}
\begin{proof}
It is sufficient to show that
$\overline{\mathcal{F}^{(+)}}=\ker(d_O)\cap \mathcal{H}^{(-)}$ and $\overline{\mathcal{F}^{(-)}}=\ker(d_O)\cap \mathcal{H}^{(+)}$. 

First we show $\overline{\mathcal{F}^{(\pm)}} \subset \ker(d_O)\cap \mathcal{H}^{(\mp)}$. 
\seg{By definition of $\mathcal{F}^{(\pm)}$, it is obvious that 
	\[ \mathcal{F}^{(\pm)} \subset \ker(d_O)\cap \mathcal{H}^{(\mp)}. \]
Since $\mathcal{H}^{(\pm)}$ and $\ker d_O$ are closed sets, we have $\overline{\mathcal{F}^{(\pm)}}\subset \ker(d_O)\cap \mathcal{H}^{(\mp)}$.}

Now we will show $\overline{\mathcal{F}^{(+)}} \supset \ker(d_O)\cap \mathcal{H}^{(-)}$. 
We define a subtree $\mathbb{T}_\psi=(V_\psi,A_\psi)$ induced by $\psi\in \ker(d_O)\cap \mathcal{H}^{(-)}$ as 
	\[ V_\psi=\{o(e)\in V: e\in \supp(\psi)\},\;A_\psi=\supp(\psi),  \] 
where $\supp(\psi)=\{e\in A: \psi(e)\neq 0\}$ for $\psi \in \A$. 
Since $\psi\in\ker(d_O)$, for any $e\in \supp(\psi)$, we have $\deg(o(e))\geq 2$, 
which means that $\mathbb{T}_\psi$ is decomposed into disjoint infinite connected subtrees which have no leaves. 
Then $\psi$ can be decomposed into $\psi=\psi_1\oplus \psi_2\oplus\cdots$. 
Here $\supp(\psi_i)\cap \supp(\psi_j)=\emptyset$ $(i\neq j)$, and 
for any $e\in\supp(\psi_j)$, there exists at least one arc $e'\in \supp(\psi_j)\setminus\{\bar{e}\}$ 
such that $t(e)=o(e')$. 

From now on, we assume $\psi\in \ker(d_O)\cap \mathcal{H}^{(-)}$ so that $\mathbb{T}_\psi$ is connected. 
We put $o_\psi\in V_\psi$ as the most closest vertex from the origin vertex $o$. 
We define $e^{(u)}_k$ $(k=0,\dots,m(u)-1)$ are defined by 
$o(e^{(u)}_k)=u$ with $\dist(o,o(e^{(u)}_k))<\dist(o,t(e^{(u)}_k))$, 
where $m(u)$ is defined by Eq.~(\ref{mu})\segB{: in other words, they are arcs from $u$ to its children.} 
For every $u\in V_\psi\setminus \{o_\psi\}$, 
\hig{
we also define $e^{(u)}_{\hig{-}}\in A_\psi$ by the arc such that 
$o(e_{\hig{-}}^{(u)})=u$ with $\dist(o,o(e_{\hig{-}}^{(u)}))>\dist(o,t(e_{\hig{-}}^{(u)}))$: in other words, it is the one from $u$ to its parent.  
}
For each $u\in V_\psi$, we set subspaces $\A_u$, $\V_u$ and $\W_u$ by 
\begin{align*}
	\A_u &= \{\phi\in \A : o(e)\neq u \Rightarrow \phi(e)=0 \}, \\
        \V_u &= \begin{cases}
        	\bs{0} & \text{: $u=o_{\hig{\psi}}$, } \\
                \\
        	\left\{ \phi \in \A_u: \phi(e_0^{(u)})=\cdots=\phi(e_{\kappa-1}^{(u)})=-\phi(\hig{e_{-}^{(u)}})/m(u) \right\} & \text{: $u\neq o_{\hig{\psi}}$.}
                \end{cases}\\
        \W_u &= \begin{cases}
        	\bigg\{ \phi\in \A_u: \sum_{k=0}^{m(u)-1}\phi(e_k^{(u)})=0 \bigg\} & \text{: $u=o_\psi$,} \\
                \\
        	\bigg\{ \phi\in \A_u: \phi(\hig{e_{-}^{(u)}})=\sum_{k=0}^{m(u)-1}\phi(e_k^{(u)})=0 \bigg\} & \text{: $u\neq o_\psi$, }
                \end{cases} 
        \end{align*}
where $\bs{0}$ is the set whose element is only $0$-constant function.
%
\begin{remark}
Put $\mathcal{X}_u=\{\phi\in \mathcal{A}: \phi(e_-^{(u)})=\phi(e_0^{(u)})=\cdots=\phi(e_{m(u)-1}^{(u)}) \}$. 
Then by definition, 
	\[ \A_u=\V_u\oplus \W_u \oplus \mathcal{X}_u.  \]
Moreover, for any $\phi\in \V_u\oplus \W_u$, we have $\sum_{e:o(e)=u}\phi(e)=0$ which means $\phi\in \ker(d_O)$ while 
for any $\phi'\in \mathcal{X}_u$, it is obvious that $\phi'\in \ker(d_O)^\bot$.  
\end{remark}
Consider $\mathbb{T}^{(o_\psi)}=(V^{(o_\psi)},A^{(o_\psi)})$, 
which is the subtree of $\mathbb{T}$ with 
the root $o_\psi$.  Then $\mathbb{T}_{\psi}$ is a subtree of it. 
Moreover we set 
$V_{i}=
\{v\in V^{(o_\psi)}; \text{dist}(o_{\psi},v)=i \text{ in }\mathbb{T}^{(o_\psi)}\}$
 and 
$ A_{i}=\{e\in A^{(o_\psi)}; o(e)\in V_{i}\text{ and }t(e)\in V_{i+1}\}$ for 
$i=0,1,2,\dots$. Naturally $V_{0}=\{o_\psi\}$ and 
\segB{$A_{0}=\{e\in A_{o_\psi}; o(e)=o_\psi, \;t(e)\in V_1\}$. }

Since $\ker(d_O)\cap \A_u=\V_u\oplus\W_u$ from the above remark, 
we have
	\begin{equation}
        \psi|_{\A_u}=f_u\oplus g_u,
        \end{equation}
where $f_u\in \V_u$ and $g_u\in \W_u$. 
\segB{Here $\psi|_{\A_u}$ is the projection of $\psi$ onto $\A_u$, that is, 
	\[ \psi|_{\A_u}(a)=\begin{cases} \psi(a) & \text{: $o(a)=u$,} \\ 0 & \text{: otherwise.}  \end{cases} \;(u\in V)\]}
\hig{
Let us construct a sequence $\{\phi_{n}\}$ for approximating 
$\psi \in \ker(d_O)\cap \A_u$. }

\hig{
For $u=o_{\psi}$, we have $\psi|_{\mathcal{A}_{o_{\psi}}}=g_{o_{\psi}}$. 
Thus there exists a unique system of complex coefficients $\{C_{o_{\psi},j}\}$ such 
that 
$$
g_{o_{\psi}} = 
\sum_{j=1}^{ m(o_{\psi})-1}
 C_{o_{\psi},j}\cdot\varphi^{(\pm)}_{o_{\psi},j}|_{\mathcal{A}_{o_{\psi}}};
$$
we set $$
\phi_{0}=\sum_{j=1}^{m(o_{\psi})-1}C_{o_{\psi},j}\cdot\varphi^{(\pm)}_{o_{\psi},j}. 
$$
}
\hig{
We should remark that $\psi(e)=\phi_{0}(e)$ for 
every $e\in A\backslash \cup_{i=1}^{\infty}(A_{i}\cup\bar{A_{i}})$. 
Here $\bar{E}=\{e\, |\,\bar{e}\in E\}$ for the set of arcs $E$. 
Moreover, if there exists an arc $e_{0}\in A_{0}$ such that $\psi(e_{0})=0$, 
then $\phi(e)=0$ for every arc $e$ in $\mathbb{T}^{t(e_{0})}$, since 
$\mathbb{T}_{\psi}$ is assumed to be connected. On the other hand, 
we can easily check $\phi_{0}(e)=0 $ for every arc $e$ in $\mathbb{T}^{t(e_{0})}$ 
by our construction in Definition~2. 
We have $\psi = \phi_{0}$ in $\mathbb{T}^{t(e_{0})}$.  
}

\hig{
Next we focus on $V_{1}$ and $A_{1}$. From our observation stated above, 
we only have to treat $V_{1}\cap \{t(e)\, ;\, \psi(e)\not=0
\text{ for } e\in A_{0}\}$. 
It follows from Remark~2 that, for $u=t(e)\in V_{1}$ such that $e\in A_{0}$,
$$
\psi|_{\mathcal{A}_{u}}=f_{u}\oplus g_{u};
$$
we can set $f_{u}=\phi_{0}|_{\mathcal{A}_{u}}$, since $f_u$ is the unique function in $\V_u\oplus \W_u$ such that $f_u(\bar{e})\neq 0$. 
Obviously it holds that 
$\sum_{e\in A_{u}}\psi(e)=\sum_{e\in A_{u}}\phi_{0}(e)$. 
Remark again we assume $\psi(e)\not=0$ for $e\in A_{0}$ and $t(e)=u$ here. 
Then there exists a unique system of complex coefficients $\{C_{u,j}\}$ such 
that 
$$
g_{u} = 
\sum_{j=1}^{ m(u)-1}
 C_{u,j}\cdot\varphi^{(\pm)}_{u,j}|_{\mathcal{A}_{u}};
$$
we set 
$$
\phi_{1}=\phi_{0} + \sum_{u\in V_{1}\cap \{t(e)\, ;\, \psi(e)\not=0
\text{ for } e\in A_{0}\}}
\sum_{j=1}^{m(u)-1}C_{u,j}\cdot\varphi^{(\pm)}_{u,j}. 
$$
}

\hig{Recursively we can set, for any $n=1,2,\dots ,$ 
$$
\phi_{n}=\phi_{n-1} + \sum_{u\in V_{n}\cap \{t(e)\, ;\, \psi(e)\not=0
\text{ for } e\in A_{n-1}\}}
\sum_{j=1}^{m(u)-1}C_{u,j}\cdot\varphi^{(\pm)}_{u,j}
$$
as in the same fashion stated above. 
It is easy to see that $\psi(e) = \phi_{n}(e)$, 
$e\in A\backslash \cup_{i=n+1}^{\infty}(A_{i}\cup\bar{A_{i}})$ and that 
$\phi_{n}(e)\in \mathcal{F}^{(+)}$ for every $n=1,2,\dots$ . 
}

\hig{
Now let us show $\psi\in\bar{\mathcal{F}^{(+)}}$. 
Let $\psi_{>m}=\psi|_{\cup_{i=m+1}^{\infty}(A_{i}\cup\bar{A_{i}})}$ and 
 $\phi_{>m}=\phi_{m}|_{\cup_{i=m+1}^{\infty}(A_{i}\cup\bar{A_{i}})}$. 
Since $\psi\in \A=\ell^{2}(A)$ by the assumption, for any $\epsilon>0$, 
there exists $k$, such that 
	\[ || \psi_{>k} ||^2 <\epsilon. \]
On the other hand, it holds that	
	\begin{align}
        \| \psi-\phi_k \|^{2} &= \|\psi_{>k} -\phi_{>k} \|^{2}
                \leq 2(\|\psi_{>k}\|^{2} +\|\phi_{>k} \|^{2})\\
                &\leq 2\big(\epsilon + 
                \seg{\sum_{e\in A_{k}} \frac{|\psi(e)|^{2}}{m^2(t(e))} || \varphi^{(+)}_{t(e),j} ||^2} \big)\\
                &\leq 2\epsilon\left(1+\frac{2c_0}{\kappa_0^2}\right).
      	\end{align}
}
Then we have $\psi \in \overline{\F^{(+)}}$. 

When $\mathbb{T}_\psi$ is a union of disjoint infinite subtrees, we take a linear combination of the above $\psi$'s 
and take a similar estimation.
Then we have $\psi\in \overline{\F^{(+)}}$. 
Now we arrive at $\ker(d_O)\cap \mathcal{H}^{(-)} \subset \overline{\F^{(+)}}$. 
In a similar way, we obtain $\ker(d_O)\cap \mathcal{H}^{(+)} \subset \overline{\F^{(-)}}$. 
\end{proof}

We close this section to illustrate examples.  
\seg{The first one recovers some previous results in \cite{CHKS}.}
We define $\bar{\nu}^{(J)}: V(\mathbb{T}_\infty) \to [0,1]$ $(J\in\{A,B\})$ by 
	\[ \bar{\nu}^{(J)}(u):=\lim_{T\to\infty}\frac{1}{T}\sum_{n=0}^{T-1}\left\{\sum_{e:t(e)=u} |(U^n\psi_0^{(J)})(e)|^2\right\}.  \]
\seg{\begin{corollary}
If $\sigma(T)=\sigma_c(T)$, then 
	\begin{align}
	 \bar{\nu}^{(J)}(u)=\sum_{e:t(e)=u}\left|\left(\Pi_{\mathcal{F}_+\oplus \mathcal{F}_-}\psi_0^{(J)}\right)(e)\right|^2.
	\end{align}
\end{corollary}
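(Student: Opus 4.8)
The plan is to evaluate the Cesàro average spectrally, exploiting the fact that under $\sigma(T)=\sigma_c(T)$ only the point spectrum of $U$ survives the time average. Fix $u$ and an arc $e$ with $t(e)=u$, and let $\mu_e$ be the complex spectral measure of the unitary $U$ attached to the pair $(\delta_e,\psi_0^{(J)})$, so that $(U^n\psi_0^{(J)})(e)=\langle\delta_e,U^n\psi_0^{(J)}\rangle=\int_{S^1}z^n\,d\mu_e(z)$. Wiener's theorem for the Fourier coefficients of a finite measure then gives
\[
\lim_{T\to\infty}\frac1T\sum_{n=0}^{T-1}\bigl|(U^n\psi_0^{(J)})(e)\bigr|^2
=\sum_{z_0\in S^1}\bigl|\mu_e(\{z_0\})\bigr|^2,
\]
where the sum runs over the atoms of $\mu_e$, that is, over the eigenvalues of $U$. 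Since the set of $e$ with $t(e)=u$ is finite, I may sum this identity over those $e$ and interchange the finite sum with the limit, obtaining $\bar\nu^{(J)}(u)$ as a finite sum of squared eigenprojection amplitudes.

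Next I would pin down which atoms occur. The spectral mapping $\sigma(U|_{\mathcal L})=J^{-1}(\sigma(T))=J^{-1}(\sigma_c(T))$, combined with the relation $\sigma_c(U)=J^{-1}(\sigma_c(T))$ and the splitting $U=U|_{\mathcal L}\oplus U|_{\mathcal L^\perp}$ (so $\sigma_c(U|_{\mathcal L})=\sigma_c(U)$), yields $\sigma(U|_{\mathcal L})=\sigma_c(U|_{\mathcal L})$; hence $U|_{\mathcal L}$ carries no point spectrum. Thus every atom of $\mu_e$ comes from the birth part, where Lemma~2 gives $U|_{\mathcal L^\perp}=-1\oplus1$ and Theorem~\ref{thm1} identifies the eigenspaces as $\overline{\mathcal F^{(+)}}=\ker(d_O)\cap\mathcal H^{(-)}$ (eigenvalue $+1$) and $\overline{\mathcal F^{(-)}}=\ker(d_O)\cap\mathcal H^{(+)}$ (eigenvalue $-1$). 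Writing $a=\Pi_{\overline{\mathcal F^{(+)}}}\psi_0^{(J)}$ and $b=\Pi_{\overline{\mathcal F^{(-)}}}\psi_0^{(J)}$, and noting $\mu_e(\{1\})=a(e)$, $\mu_e(\{-1\})=b(e)$, the first step now reads
\[
\bar\nu^{(J)}(u)=\sum_{e:\,t(e)=u}\bigl(|a(e)|^2+|b(e)|^2\bigr).
\]

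The step I expect to be the crux is matching this with the asserted right-hand side $\sum_{e:t(e)=u}|(a+b)(e)|^2$, since the two differ by the interference term $2\sum_{e:t(e)=u}\mathrm{Re}\,(a(e)\overline{b(e)})$. This term is globally harmless --- summed over all of $V$ it equals $2\,\mathrm{Re}\,\langle b,a\rangle=0$ because $\overline{\mathcal F^{(+)}}\perp\overline{\mathcal F^{(-)}}$ --- but it need not cancel vertex by vertex for an arbitrary state, so the identity genuinely uses the explicit form of $\psi_0^{(J)}$. The mechanism I would invoke is the parity of these states under the flip $S$: each $\psi_0^{(J)}$ lies in a single eigenspace $\mathcal H^{(+)}$ or $\mathcal H^{(-)}$ of $S$, and since $\overline{\mathcal F^{(+)}}\subset\mathcal H^{(-)}$ and $\overline{\mathcal F^{(-)}}\subset\mathcal H^{(+)}$ occupy orthogonal $S$-sectors, at least one of $a,b$ vanishes (both do when $\psi_0^{(J)}$ carries no birth component). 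The interference term is then absent, $\Pi_{\mathcal F_+\oplus\mathcal F_-}\psi_0^{(J)}=a+b$ reduces to the single surviving projection, and the two expressions coincide, which proves the corollary. It then remains only to record the routine analytic points: the applicability of Wiener's theorem to each finite complex measure $\mu_e$, which simultaneously disposes of the continuous part (a non-atomic measure contributes nothing to the atom-sum) and of any interference between the distinct atoms $+1$ and $-1$.
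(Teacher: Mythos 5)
Your first three steps are sound and are in fact the intended argument (the paper states this corollary without proof, but Wiener's theorem applied to $\mu_e$, the elimination of the inherited part via $\sigma(T)=\sigma_c(T)$ and the spectral mapping facts of Section~2, and the identification of the atoms $\pm1$ with $\overline{\mathcal{F}^{(+)}}$, $\overline{\mathcal{F}^{(-)}}$ via Lemma~2 and Theorem~1 is precisely the route the paper presupposes). The genuine gap is your last step. The parity mechanism you invoke is simply false: $\psi_0^{(A)}$ and $\psi_0^{(B)}$ are supported only on the outgoing arcs $e_0,\dots,e_{\kappa-1}$ at the root and vanish on their reversals, so neither is an eigenvector of $S$, i.e.\ neither lies in $\mathcal{H}^{(+)}$ or $\mathcal{H}^{(-)}$. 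Worse, for $J=B$ \emph{both} birth projections are nonzero: on the $\kappa$-regular tree, $\langle\varphi^{(\pm)}_{o,1},\psi_0^{(B)}\rangle=1$, so by the orthogonality in Lemma~1, $a=\Pi_{\overline{\mathcal{F}^{(+)}}}\psi_0^{(B)}=c\,\varphi^{(+)}_{o,1}$ and $b=\Pi_{\overline{\mathcal{F}^{(-)}}}\psi_0^{(B)}=c\,\varphi^{(-)}_{o,1}$ with $c=\kappa(\kappa-2)/(2(\kappa-1))$. The interference term you needed to kill does not vanish: at the root, $a(\bar{e}_k)=-c\,\omega^k/\kappa$ while $b(\bar{e}_k)=+c\,\omega^k/\kappa$, so $2\,\mathrm{Re}\sum_k a(\bar{e}_k)\overline{b(\bar{e}_k)}=-2c^2/\kappa\neq 0$.

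What your computation actually shows is that the corollary cannot hold under the literal reading $\Pi_{\mathcal{F}_+\oplus\mathcal{F}_-}\psi_0=a+b$: with that reading the right-hand side at $u=o$ for $J=B$ equals $\sum_k|(a+b)(\bar{e}_k)|^2=0$ (and it vanishes at every even-depth vertex, by the alternating sign in Definition~2), contradicting the strictly positive values $\bar{\nu}^{(B)}(u)$ displayed immediately after the corollary. The correct identity --- and the one your steps 1--3 literally prove --- is the eigenspace-by-eigenspace form
\begin{equation*}
\bar{\nu}^{(J)}(u)=\sum_{e:t(e)=u}\left(\left|\left(\Pi_{\overline{\mathcal{F}^{(+)}}}\psi_0^{(J)}\right)(e)\right|^2+\left|\left(\Pi_{\overline{\mathcal{F}^{(-)}}}\psi_0^{(J)}\right)(e)\right|^2\right),
\end{equation*}
which is also what the dynamics gives directly: $U^n(a+b)=a+(-1)^nb$, and the Ces\`aro average of $|a(e)+(-1)^nb(e)|^2$ is $|a(e)|^2+|b(e)|^2$, not $|a(e)+b(e)|^2$. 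So the projection in the statement must be interpreted (or corrected) as acting on each eigenspace separately. You should have stopped at your formula $\sum_e(|a(e)|^2+|b(e)|^2)$ and flagged the discrepancy with the printed statement; instead you manufactured a cancellation that does not exist, and no repair of that step can succeed, because the combined-projection formula is genuinely false for $\psi_0^{(B)}$.
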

In particular, if $\mathbb{T}$ is $\kappa$-regular, for the following initial states introduced by \cite{CHKS} 
	\begin{align}
        (\psi_0^{(A)})(e) &= \begin{cases} 1 & \text{: $e=e_k$, ($k=0,\dots,\kappa-1$),}\\ 0 & \text{: otherwise}. \end{cases} \\
        (\psi_0^{(B)})(e) &= \begin{cases} e^{2\pi \im k/\kappa} & \text{: $e=e_k$,  ($k=0,\dots,\kappa-1$),}\\ 0 & \text{: otherwise}, \end{cases} 
        \end{align}
we have 
	\[ \bar{\nu}^{(J)}(u)
        	=\begin{cases} 
                0 & \text{: $J=A$, }\\ 
        	\frac{(\kappa-2)^2}{2}\left(\frac{1}{\kappa-1}\right)^{3\mathrm{dist(o,u)}+1-\delta_o(u)} & \text{: $J=B$.} 
         	\end{cases}
        \]
which agrees with the previous result on \cite{CHKS}. }
The essential difference between initial states $\psi_0^{(A)}$ and $\psi_0^{(B)}$ is that $\psi_0^{(A)}\in {\mathcal{F}_\pm^{(\infty)}}^{\bot}$ 
while $\psi_0^{(B)}\notin {\mathcal{F}_\pm^{(\infty)}}^{\bot}$. 

\hig{
Let us give another type of example, in which we show the spectrum 
$\sigma (U)$ of $U$ consists of only eigenvalues for a special class of trees.
In other words, only localization of Grover walk 
occurs on such trees. 
As is seen in Sect.~2, if $\mathbb{T}$ is the $\kappa$-regular tree, 
$\sigma(U)$ can be expressed as 
\[
\sigma(U) = J^{-1}(\sigma(T))\cup\{\pm 1\},
\]
where $\sigma(T) =\sigma_{c}(T)=[-2\sqrt{\kappa -1}/\kappa,2\sqrt{\kappa -1}/\kappa]$, $J(z) = (z+z^{-1})/2$ and each of eigenvalues $\pm 1$ has 
infinite multiplicity. It is obvious to see 
$\sigma_{c}(U)=J^{-1}(\sigma(T))\not=\emptyset$. 
}

\hig{
Now we put, 
for the root $o$ of a tree $\mathbb{T}$, 
\[
K(r) = \inf \{\deg (x)\, ;\, x\in V(\mathbb{T}),\mathrm{dist}(o,x)\geq r\}.
\]
Obviously $\lim_{r\to\infty}K(r)=\kappa<\infty$ 
for the $\kappa$-regular tree. 
On the other hand, 
if $\mathbb{T}$ is a tree with 
$\lim_{r\to\infty}K(r) = \infty$, then it is called {\it rapidly branching}. 
For the discrete Laplacian, equivalently, the transition operator $T$ for 
isotropic random walks, it is shown in \cite{Fuj} 
that any rapidly branching tree has no continuous spectrum:
\begin{theorem}[\cite{Fuj}]
Let $\mathbb{T}$ be a rapidly branching tree. 
Then the continuous spectrum $\sigma_{c}(T)=\emptyset$, 
the essential one of $\sigma(T)$ coincides $\{0\}$ and 
\segB{every $\lambda\in \sigma(T)\setminus\{0\}$} is an eigenvalue with finite multiplicity. 
Moreover, if $0$ is an eigenvalue, then it has infinite multiplicity.
\end{theorem}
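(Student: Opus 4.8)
The plan is to reduce everything to the single spectral statement $\sigma_{ess}(T)=\{0\}$; once this is established, the remaining assertions follow from general self-adjoint spectral theory together with the tree structure. Throughout write $B_r=\{x\in V:\dist(o,x)\le r\}$, let $P_r$ be the orthogonal projection of $\V$ onto $\ell^2(B_r)$, and set $V_{>r}=\{x:\dist(o,x)>r\}$. Since $\mathbb{T}$ is locally finite, $B_r$ is finite, so $P_rT$, $TP_r$ and $P_rTP_r$ all have finite rank; hence $T-(\bs{1}_\V-P_r)T(\bs{1}_\V-P_r)=P_rT+TP_r-P_rTP_r$ is finite-rank, in particular compact. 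By Weyl's theorem on stability of the essential spectrum under compact perturbations,
	\[ \sigma_{ess}(T)=\sigma_{ess}\big((\bs{1}_\V-P_r)T(\bs{1}_\V-P_r)\big)\subset\big[-\|(\bs{1}_\V-P_r)T(\bs{1}_\V-P_r)\|,\ \|(\bs{1}_\V-P_r)T(\bs{1}_\V-P_r)\|\big], \]
so it suffices to show that the compression of $T$ to $\ell^2(V_{>r})$ has norm tending to $0$ as $r\to\infty$.

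The analytic heart is a quadratic-form bound exploiting that $\mathbb{T}$ is a tree oriented away from the root. For $f$ supported on $V_{>r}$, pairing across each (parent $u$, child $v$) edge gives
	\[ \langle f,Tf\rangle=2\,\mathrm{Re}\!\!\sum_{\substack{(u,v):\,v\ \mathrm{child\ of}\ u\\ u,v\in V_{>r}}}\!\!\frac{\overline{f(u)}\,f(v)}{\sqrt{\deg(u)\deg(v)}}. \]
I would apply $|f(u)||f(v)|\le\tfrac12(\alpha_{uv}|f(u)|^2+\alpha_{uv}^{-1}|f(v)|^2)$ with the degree-adapted weight $\alpha_{uv}=1/\sqrt{\deg(u)-1}$, which on a $\kappa$-regular tree is exactly the weight reproducing the spectral radius $2\sqrt{\kappa-1}/\kappa$. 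Collecting the coefficient of each $|f(x)|^2$ (with $p(x)$ the parent of $x$): the contribution from the at most $\deg(x)-1$ child-edges is $\le\sqrt{\deg(x)-1}/(\sqrt{\deg(x)}\sqrt{K(r)})\le 1/\sqrt{K(r)}$, and the single parent-edge contributes $\le\sqrt{\deg(p(x))-1}/(\sqrt{\deg(p(x))}\sqrt{\deg(x)})\le 1/\sqrt{K(r)}$, using $\deg(y)\ge K(r)$ for all $y\in V_{>r}$ and monotonicity of $K$. Hence $|\langle f,Tf\rangle|\le\tfrac{2}{\sqrt{K(r)}}\|f\|^2$, i.e. the compression norm is at most $2/\sqrt{K(r)}$. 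Since $\mathbb{T}$ is rapidly branching, $K(r)\to\infty$, and the displayed inclusion gives $\sigma_{ess}(T)\subset\{0\}$. This step is the one I expect to be most delicate on the analytic side: the naive constant weight only recovers $\|T\|\le 1$, so one genuinely needs the $\deg$-dependent $\alpha_{uv}$ to make the bound vanish.

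Next I would upgrade to equality and read off the spectral consequences. As $\V$ is infinite dimensional and $T$ is bounded self-adjoint, $\sigma_{ess}(T)\neq\emptyset$ (otherwise $\sigma(T)$ would be a compact set of isolated finite-multiplicity eigenvalues, forcing $\dimm\V<\infty$); with $\sigma_{ess}(T)\subset\{0\}$ this yields $\sigma_{ess}(T)=\{0\}$. Consequently every $\lambda\in\sigma(T)\setminus\{0\}$ is an isolated point of the spectrum, hence an eigenvalue, and lies outside $\sigma_{ess}(T)$, hence has finite multiplicity. Moreover the continuous (band) spectrum obeys $\sigma_c(T)\subset\sigma_{ess}(T)=\{0\}$, and a single point cannot support a nontrivial continuous spectrum, so $\sigma_c(T)=\emptyset$.

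It remains to treat the multiplicity of the eigenvalue $0$. Writing $g(v)=f(v)/\sqrt{\deg(v)}$, the relation $Tf=0$ is equivalent to $\sum_{v\sim u}g(v)=0$ for every $u$, i.e. $g$ lies in the (weighted) kernel of the adjacency operator; on a tree this relation admits large freedom at every branching vertex. My plan is to show that a single nonzero $\ell^2$ solution can be superposed with modifications supported on deeper, increasingly branched descendant subtrees $\mathbb{T}^{(u)}$ to generate an infinite linearly independent family, using the rapid growth of the degrees to keep each modification square summable. I expect this last step, rather than the norm estimate, to demand the most care, as it is precisely where the quantitative branching hypothesis must be balanced against $\ell^2$-summability.
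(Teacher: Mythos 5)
You should first know that the paper itself contains no proof of this statement: it is quoted directly from Fujiwara \cite{Fuj}, so the only meaningful comparison is with what a complete proof requires. Measured against that, the first half of your argument is correct and essentially self-contained. Deleting the finite ball $B_r$ is a finite-rank perturbation, so Weyl's theorem legitimately reduces $\sigma_{\mathrm{ess}}(T)$ to the norm of the compression of $T$ to $\ell^2(V_{>r})$, and your weighted arithmetic--geometric mean inequality with $\alpha_{uv}=1/\sqrt{\deg(u)-1}$ does close: collecting coefficients, each $x\in V_{>r}$ receives at most $\sqrt{\deg(x)-1}\big/\big(\sqrt{\deg(x)}\sqrt{K(r)}\big)\le 1/\sqrt{K(r)}$ from its child-edges and at most $1/\sqrt{\deg(x)}\le 1/\sqrt{K(r)}$ from its parent-edge, giving $\|(\bs{1}_\V-P_r)T(\bs{1}_\V-P_r)\|\le 2/\sqrt{K(r)}\to 0$. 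Combined with the standard facts you invoke (nonemptiness of the essential spectrum in infinite dimension; points of $\sigma(T)\setminus\sigma_{\mathrm{ess}}(T)$ are isolated eigenvalues of finite multiplicity; a countable set, in particular a single point, cannot carry a nontrivial continuous spectral measure), this yields every claim except the last one. This is also the spirit of Fujiwara's own argument, which controls the operator outside large balls via Cheeger-type isoperimetric constants rather than your explicit weights.

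The genuine gap is the infinite multiplicity of the eigenvalue $0$, which you leave as a plan --- and the plan points in the wrong direction. Note that $0\in\sigma_{\mathrm{ess}}(T)$ alone cannot do it: the operator $\mathrm{diag}(0,1,\tfrac12,\tfrac13,\dots)$ has $0$ as a simple eigenvalue sitting inside the essential spectrum, so a genuine construction is unavoidable. Your proposed mechanism, ``use the rapid growth of the degrees to keep each modification square summable,'' is backwards. Writing $g=f/\sqrt{\deg}$, the equation $Tf=0$ reads $\sum_{u\sim v}g(u)=0$ with the weighted condition $\sum_v|g(v)|^2\deg(v)<\infty$; the minimal weighted cost of distributing a prescribed nonzero sum $-g(p(v))$ over the children $c$ of a vertex $v$ is $|g(p(v))|^2\big/\sum_{c}\deg(c)^{-1}$, which \emph{grows} when the children's degrees outstrip their number --- exactly what rapid branching permits. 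This is why $0$ need not be an eigenvalue at all on such trees and the statement is conditional: modifications built from scratch on deep subtrees are in general not square summable. A workable argument must instead manufacture new eigenfunctions from the given one $g$, for instance: (i) split $g$ into its even-level and odd-level parts, each again in the kernel; (ii) show that a nonzero $\ell^2$ kernel element supported, say, on even levels must admit an odd vertex $d$ with two children $e_1,e_2$ on which $g\neq 0$, since otherwise through every child of every support vertex $v$ exactly one grandchild carries the value $-g(v)$, so the level sums of $|g|^2$ grow geometrically, contradicting square summability; (iii) check that $h=g(e_2)\,g\bs{1}_{\mathbb{T}^{(e_1)}}-g(e_1)\,g\bs{1}_{\mathbb{T}^{(e_2)}}$ is again a nonzero kernel element, automatically in $\ell^2$ because it is a finite combination of restrictions of $g$, and supported strictly deeper; (iv) iterate, using the triangular support structure for linear independence. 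None of these ingredients appears in your sketch, so as written the final assertion of the theorem remains unproven.
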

For a kind of family of trees, it can be decided whether $0$ is an eigenvalue 
or not. For details, refer to \cite{Fuj}. 
Combining the theorem above, the spectral mapping theorem with $J(z)$ and 
Theorem~\ref{thm1}, we can easily obtain the following:
\begin{corollary}
Let $\mathbb{T}$ be a rapidly branching tree. 
Then $\sigma(U)$ consists of only eigenvalues and their accumulating points. 
In addition, if $\lambda\in\sigma(U)\backslash\{\pm 1,\pm \im\}$, then 
it is an eigenvalue of finite multiplicity; 
if $\lambda=\pm 1$, then 
it is an eigenvalue of infinite multiplicity; 
if $\lambda=\pm\im$, then it is the accumulating point of eigenvalues or 
an eigenvalues of infinite multiplicity. 
\end{corollary}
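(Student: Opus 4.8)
The plan is to split the study of $\sigma(U)$ along the invariant decomposition $U=U|_{\mathcal{L}}\oplus U|_{\mathcal{L}^\perp}$ of Proposition~1, analyze the inherited and birth parts separately, and then reassemble. First I would invoke the spectral mapping facts recorded in Section~2 (from \cite{SS,HSS}), namely $\sigma(U|_{\mathcal{L}})=J^{-1}(\sigma(T))$ together with $\sigma_c(U)=J^{-1}(\sigma_c(T))$, where $J(z)=(z+z^{-1})/2$. Feeding the theorem quoted from \cite{Fuj} into the second identity, the hypothesis $\sigma_c(T)=\emptyset$ for a rapidly branching tree forces $\sigma_c(U)=J^{-1}(\emptyset)=\emptyset$. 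Since $U$ is unitary, hence normal, its residual spectrum is empty, so $\sigma(U)$ coincides with the closure of its point spectrum; this already gives the first assertion that $\sigma(U)$ consists of eigenvalues together with their accumulation points.

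For the multiplicity statements I would first dispose of the birth part. By $(\ref{ibu})$ and Theorem~\ref{thm1} we have $U|_{\mathcal{L}^\perp}=-1\oplus 1$ on $\overline{\mathcal{F}^{(-)}}\oplus\overline{\mathcal{F}^{(+)}}$, and both summands are infinite dimensional; hence $\pm1$ are eigenvalues of $U$ of infinite multiplicity, which settles the case $\lambda=\pm1$ irrespective of any contribution from the inherited part. It then remains to track how the eigenstructure of $T$ is transported to $U|_{\mathcal{L}}$ by $J^{-1}$. The arithmetic dictionary is $J^{-1}(0)=\{\pm\im\}$, $J^{-1}(\pm1)=\{\pm1\}$, while for $\lambda\in(-1,1)\setminus\{0\}$ the preimage $J^{-1}(\lambda)=\{e^{\pm\im\arccos\lambda}\}$ consists of two distinct points on the unit circle, none of which equals $\pm1$ or $\pm\im$.

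With this dictionary, a point $\lambda\in\sigma(T)\setminus\{0\}$, which by the theorem of \cite{Fuj} is an eigenvalue of $T$ of finite multiplicity, lifts to eigenvalues $J^{-1}(\lambda)$ of $U|_{\mathcal{L}}$ of finite multiplicity; away from $\pm1$ these lie in $\sigma(U)\setminus\{\pm1,\pm\im\}$ and stay of finite multiplicity, which is the second assertion. The exceptional value $\lambda=0$, which is the entire essential spectrum of $T$, is handled by the alternative in that theorem: if $0$ is an eigenvalue of $T$ it has infinite multiplicity, so $\pm\im=J^{-1}(0)$ become eigenvalues of $U$ of infinite multiplicity; otherwise $0$ is merely an accumulation point of the discrete part $\sigma(T)\setminus\{0\}$, whence $\pm\im$ are accumulation points of the eigenvalues $J^{-1}(\lambda)$. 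This reproduces exactly the dichotomy stated for $\lambda=\pm\im$.

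The step I expect to be the main obstacle is the faithful transfer of multiplicities under $J^{-1}$: one must check that the correspondence underlying $\sigma(U|_{\mathcal{L}})=J^{-1}(\sigma(T))$ sends an eigenvector of $T$ for $\lambda$ to eigenvectors of $U$ for $e^{\pm\im\arccos\lambda}$ in a multiplicity-preserving manner, and that no inherited eigenvalues other than $\pm1,\pm\im$ are produced that would spoil the finite-multiplicity count. I would address this by reading the lift off explicitly from the two-dimensional $S$-cyclic subspaces used in \cite{SS,HSS}, and by observing that the inherited and birth contributions to $\pm1$ simply add, so that the infinite multiplicity coming from $\overline{\mathcal{F}^{(\pm)}}$ always dominates.
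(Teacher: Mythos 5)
Your proposal is correct and follows essentially the same route as the paper, which obtains the corollary by combining Fujiwara's theorem on $\sigma(T)$, the spectral mapping theorem $\sigma(U|_{\mathcal{L}})=J^{-1}(\sigma(T))$ with $\sigma_c(U)=J^{-1}(\sigma_c(T))$, and the birth-part description $U|_{\mathcal{L}^\perp}=-1\oplus 1$ on the infinite-dimensional spaces $\overline{\mathcal{F}^{(-)}}\oplus\overline{\mathcal{F}^{(+)}}$ from Theorem~1. You have merely made explicit the dictionary $J^{-1}(0)=\{\pm\im\}$ and the multiplicity bookkeeping that the paper leaves to the reader.
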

}
\section{An approximation of the Grover walk on $\mathbb{T}$}
In this section, we consider a quantum walk induced by a random walk with the Dirichlet boundary condition. 
Almost all topics in this paper are discussed on trees, but the concept of this quantum walk is defined not only for a tree 
but for a general graph. Thus we firstly in Sect.~4.1 construct such a quantum walk on a general graph, 
which applied to a quantum search algorithm with some marked elements on it \cite{AKR,Sze,SKW}. 
After that, in Sect.~4.2, we return to the case where a graph is a tree and a quantum walk is the Grover walk on $\mathbb{T}$. 
This observation gives some information on the structure of the birth eigenspace discussed in Theorem~1. 
\subsection{QW induced by Dirichlet random walk on graphs}
Let $G'=(V',A')$ be a connected graph, and $M\subset V'$ be the set of marked elements. 
The total Hilbert space of this quantum walk is $\mathcal{A}'=\ell^2(A')$ and  
we set $\mathcal{V}'$ by $\ell^2(V')$. 
We assign a weight to each arc $\alpha: A'\to \mathbb{C}$ so that $\sum_{a:o(a)=v}|\alpha(a)|^2=1$ for all $v\in V'$. 
Define $d_{\T,M},d_{O,M}: \A'\to \V'$ by 
	\begin{align*}
         (d_{\T,M}\psi)(v) &= 
        	\begin{cases}
                \sum_{e:t(e)=v}\alpha(\bar{e}) \psi(e) & \text{: $v\in M^c$,}\\
                0 & \text{: $v\in M$. }
                \end{cases} \\
        (d_{O,M}\psi)(v) &= 
        	\begin{cases}
                \sum_{e:o(e)=v}\alpha(e) \psi(e) & \text{: $v\in M^c$,}\\
                0 & \text{: $v\in M$. }
                \end{cases}
        \end{align*} 
We define $A_M^{(+)}\subset A'$ by $\{e\in A': t(e)\in M^c\}$ and $A_M^{(-)}\subset A'$ by $\{e\in A': o(e)\in M^c\}$. 
The adjoint operators are
	\begin{align*}
         (d_{\T,M}^*f)(e) &= 
        	\begin{cases}
                \overline{\alpha(\bar{e})} f(t(e)) & \text{: $e\in A_M^{(+)}$,}\\
                0 & \text{: $e\notin A_M^{(+)}$. }
                \end{cases} \\
        (d_{O,M}^*f)(e) &= 
        	\begin{cases}
                \overline{\alpha(e)} f(o(e)) & \text{: $e\in A_M^{(-)}$,}\\
                0 & \text{: $e\notin A_M^{(-)}$. }
                \end{cases}
        \end{align*}
We have the following Lemma. 
\begin{lemma}
\noindent
\begin{enumerate}
\item $d_{\T,M}S'=d_{O,M}$, where $(S'\psi)(e)=\psi(\bar{e})$; 
\item $d_{\T,M}^*d_{\T,M}=d_{O,M}^*d_{O,M}=\Pi_{\V_M}$, where $\V_M=\{f\in \V': v\in M \Rightarrow f(v)=0\}$. Here 
$\Pi_{\mathcal{K}}$ is the projection operator onto $\mathcal{K}\subset \V'$; 
\item $d_{\T,M}^*d_{O,M}=d_{O,M}^*d_{\T,M}$ is a self-adjoint operator $T_M'$ such that
	\[ \langle \delta_u,T_M'\delta_v \rangle
        	=\begin{cases} 
                \sum_{e: o(e)=v,t(e)=u}\overline{\alpha(\bar{e})} \alpha(e) & \text{: $u,v\in M^c$} \\
                0 & \text{: otherwise,}
        	\end{cases} \]
that is, $T_M'$ is the cut off of the self-adjoint operator of $T'$ at the target vertices $M$; 
$T_M'=\Pi_{\V_M'}T'\Pi_{\V_M'}$, where $T'$ is 
	\[ \langle \delta_u,T'\delta_v \rangle
        	= \sum_{e: o(e)=v,t(e)=u}\overline{\alpha(\bar{e})} \alpha(e). \]
\item $\sigma(T_M)\subset(-1,1)$. 
\end{enumerate}
\end{lemma}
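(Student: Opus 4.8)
The plan is to prove the four assertions of this Lemma largely by direct computation, leaning on the structure of the Dirichlet-cut-off boundary operators, and then to identify the single genuinely substantive point---item (4)---which requires a spectral bound rather than an algebraic identity.

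For item (1), I would simply evaluate both sides on a test arc $e$. Using $(S'\psi)(e)=\psi(\bar e)$ and the defining formula for $d_{\T,M}$, I compute $(d_{\T,M}S'\psi)(v)=\sum_{e:t(e)=v}\alpha(\bar e)(S'\psi)(e)=\sum_{e:t(e)=v}\alpha(\bar e)\psi(\bar e)$ for $v\in M^c$; reindexing over $\bar e$ (so that $o(\bar e)=v$) turns this into $\sum_{f:o(f)=v}\alpha(f)\psi(f)=(d_{O,M}\psi)(v)$, while both sides vanish on $M$. For item (2) I would compose $d_{\T,M}^*$ with $d_{\T,M}$ on an arc $e\in A_M^{(+)}$ and observe that the normalization $\sum_{a:o(a)=v}|\alpha(a)|^2=1$ collapses the resulting sum to the characteristic projection $\Pi_{\V_M}$; the identical computation with $\alpha(e)$ in place of $\alpha(\bar e)$ handles $d_{O,M}^*d_{O,M}$. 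Item (3) is again a matrix-entry calculation: expand $\langle\delta_u,d_{\T,M}^*d_{O,M}\delta_v\rangle$, note the only surviving terms come from arcs $e$ with $o(e)=v,\ t(e)=u$ and $u,v\in M^c$, and read off the stated kernel; self-adjointness and the cut-off identity $T_M'=\Pi_{\V_M'}T'\Pi_{\V_M'}$ follow by comparing with the definition of $T'$.

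The heart of the Lemma is item (4), the strict spectral containment $\sigma(T_M')\subset(-1,1)$. The natural approach is to show $\|T_M'\|<1$ using the factorization from item (3). Writing $T_M'=d_{\T,M}^*d_{O,M}$ and recalling from item (2) that both $d_{\T,M}$ and $d_{O,M}$ have operator norm at most $1$ (they are partial isometries, being $\sqrt{\Pi_{\V_M}}$ in norm), I obtain the soft bound $\|T_M'\|\le 1$ immediately, which only gives $\sigma(T_M')\subset[-1,1]$. The real obstacle is upgrading this to a \emph{strict} inequality at the endpoints $\pm1$. For the underlying (non-cut-off) operator $T'$ on a tree one has $\pm1\notin\sigma(T')$ whenever the minimal degree exceeds $2$, but with the Dirichlet condition the cleaner route is to note that $T_M'=\Pi_{\V_M'}T'\Pi_{\V_M'}$ is a strict compression of $T'$. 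I expect the argument to proceed by assuming $\pm1\in\sigma(T_M')$ and deriving a contradiction: an approximate eigenfunction $f_n$ with $\|f_n\|=1$ and $T_M'f_n\to\pm f_n$ forces the Cauchy--Schwarz inequalities used in bounding $\langle f_n,T_M'f_n\rangle$ to become asymptotic equalities, which pins down the phase relationship $\alpha(e)\psi(e)$ must satisfy along every arc and propagates a nonzero mass out to the infinite boundary---incompatible with $\ell^2$ on a tree with branching, or with the Dirichlet vanishing on $M$.

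The main obstacle, then, is item (4), and specifically the passage from $\|T_M'\|\le 1$ to strict separation from $\pm1$; everything else is bookkeeping. I would handle it by making the saturation analysis quantitative: expand $1-\langle f,(T_M')^2 f\rangle=\|f\|^2-\|d_{O,M}f\|^2+(\|d_{O,M}f\|^2-\|d_{\T,M}^*d_{O,M}f\|^2)$ and bound each nonnegative defect below, using that on a tree of minimal degree $\kappa_0+1\ge 3$ the operators $d_{\T,M},d_{O,M}$ lose a definite fraction of the norm at each vertex. If a purely self-contained quantitative bound proves awkward, the fallback is to invoke that $\sigma(T')\subset(-1,1)$ for the infinite tree from the spectral facts already recorded in Section~2 (the spectrum of $T$ is contained in $[-2\sqrt{\kappa-1}/\kappa,\,2\sqrt{\kappa-1}/\kappa]\subset(-1,1)$) and observe that a compression cannot enlarge the spectrum past the convex hull, combined with a finite-rank perturbation argument to exclude the endpoints.
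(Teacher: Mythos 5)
The paper states this lemma without proof, so your proposal has to be judged on its own merits. Your treatment of items (1)--(3) is the routine computation any proof would contain and is essentially right, up to a type mismatch that you inherit from the paper's own wording: $d_{\T,M}^*d_{\T,M}$ acts on $\A'$, so it cannot equal $\Pi_{\V_M}$, which acts on $\V'$; the identities that your normalization computation actually proves are $d_{\T,M}d_{\T,M}^*=d_{O,M}d_{O,M}^*=\Pi_{\V_M}$ and $T_M'=d_{\T,M}d_{O,M}^*=d_{O,M}d_{\T,M}^*$ (matching $T=d_\T d_O^*$ of Section~2 and Remark~3), and the sums there run over arcs at a fixed \emph{vertex}, not over a fixed arc.

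The genuine gap is item (4). As stated --- arbitrary connected $G'$, arbitrary $M$ --- the claim is false, so neither your main route nor any other can close it without first adding a hypothesis, which you never isolate. Take $G'=\mathbb{Z}$ with $\alpha\equiv 1/\sqrt{2}$ and $M=\{0\}$: then $T_M'$ is a direct sum of two Dirichlet half-line Jacobi matrices and $\sigma(T_M')=[-1,1]\ni \pm 1$ (purely continuous spectrum, so there is no eigenfunction for a saturation argument to bite on, and approximate eigenfunctions genuinely exist). Minimal degree $\geq 3$ does not rescue the statement either: for the ladder $\mathbb{Z}\times K_2$, amenability gives $1\in\sigma_{\mathrm{ess}}(T')$, and then for any finite $M$ the difference $T_M'-T'$ has finite rank, so by Weyl's theorem $\sigma_{\mathrm{ess}}(T_M')=\sigma_{\mathrm{ess}}(T')\ni 1$. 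This last observation also shows that your fallback mechanism is exactly backwards: a ``finite-rank perturbation argument to exclude the endpoints'' cannot exist, because compact perturbations \emph{preserve} essential spectrum; that is precisely why these counterexamples work.

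What makes (4) true in the two settings where the paper actually uses it is finiteness: $M\neq\emptyset$ together with either $G'$ finite (the search setting of \cite{AKR,Sze,SKW}) or $M^c$ finite (Section~4.2, where $M=V_n^c$). In both cases $T_M'$ is effectively a finite matrix, so its spectrum consists of eigenvalues (and $0$), and $\pm 1$ is excluded by your saturation idea made \emph{exact}, with no approximate eigenfunctions and no mass escaping to infinity. Concretely: if $T_M'f=\pm f$ with $\|f\|=1$, then $\pm 1=\langle d_{O,M}^*f,d_{\T,M}^*f\rangle$ together with $\|d_{\T,M}^*f\|=\|d_{O,M}^*f\|=\|\Pi_{\V_M}f\|\le 1$ forces $f$ to be supported on $M^c$ and $d_{\T,M}^*f=\pm\, d_{O,M}^*f$. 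Evaluating this identity on an arc $e$ with $o(e)\in M^c$ and $t(e)\in M$ gives $\overline{\alpha(e)}f(o(e))=0$, so $f$ vanishes at every neighbor of $M$ (one needs $\alpha$ nowhere zero, true for the Grover weights); evaluating it on arcs with both endpoints in $M^c$ propagates this vanishing through each connected component of $M^c$; and connectivity of $G'$ guarantees every such component meets the neighborhood of $M$. Hence $f\equiv 0$, a contradiction. Separately, the first half of your fallback is sound and is the fastest route in the infinite-tree application: with $\kappa_0\ge 2$ one has $\|T'\|<1$, and a compression $\Pi_{\V_M}T'\Pi_{\V_M}$ of a self-adjoint operator cannot have larger norm, so $\sigma(T_M')\subset[-\|T'\|,\|T'\|]\subset(-1,1)$; but that argument covers only the tree case, not the finite-graph search setting, and it should be stated as the hypothesis it is rather than folded into a perturbation argument that cannot work.
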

\begin{definition}
Let $U_M': \A'\to \A'$ be the time evolution of the quantum walk induced by $T_M'$;
	\[ U'=S'C_M', \]
where $C_M'=2d_{\T,M}^*d_{\T,M}-\bs{1}_{\A'}$. 
\end{definition}
Put $U'=S'C'$ by the time evolution of the quantum walk with $M=\emptyset$. 
Then 
	\[ \langle \delta_f, U'\delta_e \rangle = \bs{1}_{\{t(e)=o(f)\}}\left(2\overline{\alpha(\bar{f})}\alpha(e)-\delta_{e,\bar{f}}\right). \]
We have 
	\[ \langle \delta_f, U_M'\delta_e \rangle
        	= 
                \begin{cases}
                \langle \delta_f, U'\delta_e \rangle &  \text{: $t(e)\notin M$,} \\
                -\delta_{e,\bar{f}}  & \text{: $t(e)\in M$.}
                \end{cases} 
        \]
\begin{lemma}It holds that 
	\begin{equation}
         d_{\T,M}^{*} (\V')\cap d_{O,M}^{*} (\V')=\bs{0}. 
        \end{equation}
\end{lemma}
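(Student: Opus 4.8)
The plan is to show that any $\psi$ lying in both images must vanish, by turning the two representations of $\psi$ into a self-improving norm estimate governed by the cut-off operator $T_M'$. First I would take $\psi\in d_{\T,M}^{*}(\V')\cap d_{O,M}^{*}(\V')$ and fix $f,g\in\V'$ with $\psi=d_{\T,M}^{*}f=d_{O,M}^{*}g$. Since $(d_{\T,M}^{*}f)(e)$ only involves $f(t(e))$ for $t(e)\in M^{c}$, and likewise $(d_{O,M}^{*}g)(e)$ only involves $g(o(e))$ for $o(e)\in M^{c}$, replacing $f,g$ by $\Pi_{\V_M}f,\Pi_{\V_M}g$ leaves $\psi$ unchanged; hence I assume $f,g\in\V_M$ from the start. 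Using $d_{\T,M}d_{\T,M}^{*}=d_{O,M}d_{O,M}^{*}=\Pi_{\V_M}$ from the preceding Lemma, both adjoints act as isometries on $\V_M$:
\[ \|\psi\|^{2}=\langle d_{\T,M}^{*}f, d_{\T,M}^{*}f\rangle=\langle f,\Pi_{\V_M}f\rangle=\|f\|^{2}, \]
and the same computation gives $\|\psi\|=\|g\|$.

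Next I would substitute one representation into the other. Writing $\|\psi\|^{2}=\langle d_{\T,M}^{*}f,\psi\rangle=\langle f, d_{\T,M}\psi\rangle$ and inserting $\psi=d_{O,M}^{*}g$, the preceding Lemma identifies the composition $d_{\T,M}d_{O,M}^{*}$ (as an operator on $\V'$) with the cut-off operator $T_M'$, so that
\[ \|\psi\|^{2}=\langle f, T_M' g\rangle. \]
Applying the Cauchy--Schwarz inequality together with the operator-norm bound and the isometry relations $\|f\|=\|g\|=\|\psi\|$ then yields the self-improving estimate
\[ \|\psi\|^{2}=|\langle f, T_M' g\rangle|\le \|T_M'\|\,\|f\|\,\|g\|=\|T_M'\|\,\|\psi\|^{2}. \]

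The decisive point, and the only step that is more than bookkeeping, is to upgrade the spectral inclusion $\sigma(T_M')\subset(-1,1)$ from the preceding Lemma to the strict norm bound $\|T_M'\|<1$. Because $T_M'$ is self-adjoint, its norm equals its spectral radius $\sup\{|\lambda|:\lambda\in\sigma(T_M')\}$; since the spectrum of a bounded operator is compact, this supremum is attained at a point of $\sigma(T_M')\subset(-1,1)$ and therefore is strictly less than $1$. With $\|T_M'\|<1$ the inequality above forces $(1-\|T_M'\|)\|\psi\|^{2}\le 0$, hence $\psi=0$, which is the claim. If one prefers to bypass the compactness remark, the isometry relations also give $f=d_{\T,M}\psi=T_M'g$ and $g=d_{O,M}\psi=T_M'f$, so $f=(T_M')^{2}f$; by the spectral mapping theorem $\sigma((T_M')^{2})\subset[0,1)$, so $1$ lies in the resolvent set of $(T_M')^{2}$ and again $f=\psi=0$. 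I expect the main obstacle to be precisely this passage from an open-interval spectral statement to a usable quantitative gap, together with the routine verification that the reduction to $f,g\in\V_M$ is legitimate.
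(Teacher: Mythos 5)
Your proof is correct, and it pivots on exactly the same key fact as the paper's own proof: the spectral inclusion $\sigma(T_M')\subset(-1,1)$ from the preceding lemma. The mechanics, however, differ in a way worth noting. The paper's proof writes $\psi=d_{\T,M}^{*}f=\gamma\,d_{O,M}^{*}f$ with the \emph{same} vector $f$ and a scalar $\gamma$, argues $\gamma=\pm1$ because $\psi$ would then be an eigenfunction of $S'$, applies $d_{\T,M}$ to obtain $f=\pm T_M'f$, and invokes the spectral inclusion; the proportionality of the two representing vectors is never justified there, so the paper's argument as written has a gap that your proof does not. Your fallback route --- $f=d_{\T,M}\psi=T_M'g$ and $g=d_{O,M}\psi=T_M'f$, hence $f=(T_M')^{2}f$ with $1\notin\sigma\bigl((T_M')^{2}\bigr)$ by spectral mapping --- is essentially the paper's intended argument with that unjustified step repaired, and it is the cleanest version: it needs neither self-adjointness nor compactness of the spectrum. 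Your main route (isometry of $d_{\T,M}^{*},d_{O,M}^{*}$ on $\V_M$, then Cauchy--Schwarz against the strict bound $\Vert T_M'\Vert<1$) is also valid, at the cost of the extra observation that norm equals spectral radius for a self-adjoint operator and that the compact set $\sigma(T_M')$ stays away from $\pm1$. One cosmetic point: the preceding lemma in the paper states the identities as $d_{\T,M}^{*}d_{\T,M}=\Pi_{\V_M}$ and $d_{\T,M}^{*}d_{O,M}=T_M'$, which are ordering typos for $d_{\T,M}d_{\T,M}^{*}=\Pi_{\V_M}$ and $d_{\T,M}d_{O,M}^{*}=T_M'$ (these operators act on $\V'$); you silently used the correct ordering, which is the right call.
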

\begin{proof}
Assume that $\psi\in d_{\T,M}^{*} (\V')\cap d_{O,M}^{*} (\V')\neq \bs{0}$. 
For any $\psi\in d_{\T,M}^{*} (\V')\cap d_{O,M}^{*} (\V_n)$, there exists $f\in \V'$ such that 
$d_{\T,M}^{*}f=\gamma d_{O,M}^{*}f=\gamma d_{M,\T}^{*}f$ which implies $\gamma=\pm 1$ 
since $d_{\T,M}^{*}f$ should be an eigenfunction of $S$. Taking operation $d_{\T,M}^{*}$ to both sides, we have $f=\pm T_M' f$. 
However $\sigma(T_M')\subset (-1,1)$ holds because $T_M'$ is a cut-off operator of $T'$. 
Thus $\psi\in d_{\T,M}^{*} (\V')\cap d_{O,M}^{*} (\V')=\bs{0}$. 
\end{proof}
We introduce new boundary operators as follows.  
\begin{definition}
We define $d_{\pm,M} : \A'\to \V'$ and its adjoint operator $d_{\pm,M}^{*} : \V'\to \A'$ by 
	\begin{align} 
	d_{\pm,M} &= \frac{1}{\sqrt{2(1-{T_M'}^2)}}\left( d_{\T,M}^{*}-e^{\mp \im \arccos(T_M')}d_{O,M}^{*} \right), \\
        d_{\pm,M}^{*} &= \left( d_{\T,M}^{*}-d_{O,M}^{*}e^{\pm \im \arccos(T_M')} \right)\frac{1}{\sqrt{2(1-{T_M'}^2)}}. 
        \end{align}
\end{definition}
For $f_\nu \in \mathrm{ker}(\nu-T_M')$, we have 
	\begin{equation}
        (d_{\pm,M}^{*}f_\nu)(e)=\frac{1}{\sqrt{2(1-\nu^2)}} \left\{ \overline{\alpha(\bar{e})}f_\nu(t(e))-e^{\pm \im \arccos (\nu)}\overline{\alpha(e)}f_\nu(o(e)) \right\}.
        \end{equation}
We can easily check that 
\begin{lemma}\label{decom}
\noindent
\begin{enumerate}
\item $||d_{\pm,M}^* f||=||f||$; 
\item Let $\mathcal{L}_M\subset \A'$ denote $d_{\T,M}^{*}(\V')+d_{O,M}^{*}(\V')$. Then $\mathcal{L}_M=d_{+,M}^{*}(\V')\oplus d_{-,M}^{*}(\V')$. 
\end{enumerate}
\end{lemma}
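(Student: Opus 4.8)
The plan is to prove the two claims of Lemma~\ref{decom} in order, relying on the identities established in Lemma~2 (that $d_{\T,M}^*d_{\T,M}=d_{O,M}^*d_{O,M}=\Pi_{\V_M}$, that $d_{\T,M}^*d_{O,M}=d_{O,M}^*d_{\T,M}=T_M'$, and that $\sigma(T_M')\subset(-1,1)$, so that $1-{T_M'}^2$ is a positive, boundedly invertible operator on $\V_M$ and the functional-calculus objects $\arccos(T_M')$ and $\sqrt{2(1-{T_M'}^2)}$ are well defined). Throughout I would work on the range $\V_M$, since $d_{\T,M}^*$ and $d_{O,M}^*$ both annihilate the marked coordinates, and every operator in sight commutes with $T_M'$.

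For part (1), the idea is a direct norm computation using $\|d_{\pm,M}^*f\|^2=\langle f, d_{\pm,M}\,d_{\pm,M}^* f\rangle$. Expanding $d_{\pm,M}^*$ and $d_{\pm,M}$ and pushing the scalar functions of $T_M'$ out of the inner product (legitimate because they are self-adjoint and commute with everything), the cross terms are governed by $d_{\T,M}d_{O,M}^*$ and $d_{O,M}d_{\T,M}^*$, while the diagonal terms give $\Pi_{\V_M}$. The key algebraic fact I would use is that $d_{\T,M}d_{O,M}^*$ and $d_{O,M}d_{\T,M}^*$ restricted to $\V_M$ act again as $T_M'$ (one may check this from the explicit kernels in Lemma~2, or equivalently by writing $S'$ and using $d_{\T,M}S'=d_{O,M}$ together with ${S'}^2=\bs{1}$). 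One finds
\[
d_{\pm,M}\,d_{\pm,M}^*=\frac{1}{2(1-{T_M'}^2)}\left(\Pi_{\V_M}-e^{\mp\im\arccos T_M'}T_M'-T_M'e^{\pm\im\arccos T_M'}+\Pi_{\V_M}\right),
\]
and since $e^{\pm\im\arccos T_M'}=T_M'\pm\im\sqrt{1-{T_M'}^2}$ the two phase terms add to $2{T_M'}^2$, leaving $\tfrac{1}{2(1-{T_M'}^2)}\cdot 2(1-{T_M'}^2)\,\Pi_{\V_M}=\Pi_{\V_M}$. Hence $\|d_{\pm,M}^*f\|^2=\langle f,\Pi_{\V_M}f\rangle=\|f\|^2$, the last equality because $f$ may be taken in $\V_M$ (or because $d_{\pm,M}^*$ kills the $M$-coordinates anyway). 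I expect this Euler-formula collapse of the phase terms to be the computational crux, though it is routine once set up.

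For part (2), the inclusion $d_{+,M}^*(\V')+d_{-,M}^*(\V')\subseteq\mathcal{L}_M$ is immediate from the definitions, since each $d_{\pm,M}^*$ is by construction a linear combination of $d_{\T,M}^*$ and $d_{O,M}^*$. For the reverse inclusion, I would exhibit $d_{\T,M}^*$ and $d_{O,M}^*$ as linear combinations (with operator coefficients that are functions of $T_M'$) of $d_{+,M}^*$ and $d_{-,M}^*$; inverting the $2\times 2$ ``matrix'' relating $(d_{+,M}^*,d_{-,M}^*)$ to $(d_{\T,M}^*,d_{O,M}^*)$ is possible precisely because its determinant is proportional to $e^{\im\arccos T_M'}-e^{-\im\arccos T_M'}=2\im\sqrt{1-{T_M'}^2}$, which is invertible on $\V_M$ by $\sigma(T_M')\subset(-1,1)$. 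This gives equality of the two subspaces as sets. Finally, for the directness of the sum I would show orthogonality $d_{+,M}^*(\V')\perp d_{-,M}^*(\V')$ by computing $d_{-,M}d_{+,M}^*$ exactly as in part (1); the same phase algebra now produces a factor $\Pi_{\V_M}-e^{-\im\arccos T_M'}T_M'-T_M'e^{-\im\arccos T_M'}+e^{-2\im\arccos T_M'}\Pi_{\V_M}=(\,\Pi_{\V_M}-e^{-\im\arccos T_M'}\,)^2$-type cancellation that vanishes, so the cross-space inner products are zero. I anticipate the main obstacle is purely bookkeeping: keeping the $\pm$ phases and the noncommutative-looking (but actually commuting) factors of $\sqrt{2(1-{T_M'}^2)}$ consistently placed on the correct side, which Definition~6 has already arranged so that all factors of $T_M'$ commute through.
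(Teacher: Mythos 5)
Your proof is correct. The paper itself gives no argument for this lemma (it is introduced by ``We can easily check that''), and your computation---$d_{\pm,M}d_{\pm,M}^{*}=\Pi_{\V_M}$ via $e^{\pm\im\arccos T_M'}=T_M'\pm\im\sqrt{1-{T_M'}^{2}}$, invertibility of the $2\times 2$ operator matrix with determinant $2\im\sqrt{1-{T_M'}^{2}}$, and vanishing of the cross products $d_{\mp,M}d_{\pm,M}^{*}$---is exactly the routine verification the authors intend, including your correct observation that part~(1) must be read for $f\in\V_M$, since $d_{\pm,M}^{*}$ annihilates functions supported on $M$.
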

%
\begin{theorem}
It holds that
	\begin{align}
        \sigma(U'_{M}|_{\mathcal{L}_M}) &= J^{-1}(\sigma(T_M'))
        \end{align}
Moreover the generator of $U'_{M}|_{\mathcal{L}_M}$ can be expressed by 
	\[d_{+,M}(\arccos T_M') d_{+,M}^{*} \oplus d_{-,M}(-\arccos T_M')d_{-,M}^{*},\] that is, 
	\begin{align}
	U'_{M}|_{\mathcal{L}_M} &= d_{+,M}e^{\im\arccos T_M'}d_{+,M}^{*}\oplus d_{-,M}e^{-\im\arccos T_M'}d_{-,M}^{*}.
	\end{align}
\end{theorem}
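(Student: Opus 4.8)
The plan is to exhibit the two maps $d_{\pm,M}^*$ of Definition~7 as isometric intertwiners between $U_M'$ and the multiplication operators $e^{\pm\im\arccos T_M'}$ on $\V'$, and then to extract both the spectral mapping and the generator from the orthogonal decomposition $\mathcal{L}_M=d_{+,M}^*(\V')\oplus d_{-,M}^*(\V')$ provided by Lemma~\ref{decom}. First I would record how $U_M'=S'(2d_{\T,M}^*d_{\T,M}-\bs{1}_{\A'})$ acts on the two generating families. Using the relations collected earlier in this subsection, in the form $d_{\T,M}d_{\T,M}^*=\Pi_{\V_M}$, $d_{\T,M}d_{O,M}^*=T_M'$, $S'd_{\T,M}^*=d_{O,M}^*$ and $S'd_{O,M}^*=d_{\T,M}^*$, together with the identity $d_{\T,M}^*\Pi_{\V_M}=d_{\T,M}^*$, a short manipulation yields
\[
U_M'd_{\T,M}^*=d_{O,M}^*,\qquad U_M'd_{O,M}^*=2d_{O,M}^*T_M'-d_{\T,M}^*,
\]
which are the Dirichlet analogues of the relations behind Proposition~1.

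Next I would substitute these two identities into $d_{\pm,M}^*=(d_{\T,M}^*-d_{O,M}^*e^{\pm\im\arccos T_M'})\frac{1}{\sqrt{2(1-{T_M'}^2)}}$. Because every factor built from $T_M'$ is a function of $T_M'$ acting on $\V'$, it commutes with the other such factors and must simply be kept to the right of $d_{\T,M}^*$ and $d_{O,M}^*$; applying $U_M'$ and collecting terms gives
\[
U_M'd_{\pm,M}^*=\big(d_{\T,M}^*e^{\pm\im\arccos T_M'}+d_{O,M}^*(\bs{1}_{\V'}-2T_M'e^{\pm\im\arccos T_M'})\big)\frac{1}{\sqrt{2(1-{T_M'}^2)}}.
\]
The crux is the elementary spectral identity $1-2\cos\theta\,e^{\pm\im\theta}=-e^{\pm2\im\theta}$, which I apply through the functional calculus with $\theta=\arccos T_M'$; this is legitimate because $\sigma(T_M')\subset(-1,1)$, so that $\arccos T_M'$ is single-valued and $1-{T_M'}^2>0$. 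The bracket then collapses to $-d_{O,M}^*e^{\pm2\im\arccos T_M'}$, and factoring one copy of $e^{\pm\im\arccos T_M'}$ out to the right produces the intertwining relation
\[
U_M'd_{\pm,M}^*=d_{\pm,M}^*e^{\pm\im\arccos T_M'}.
\]

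Finally I would read off the statement from Lemma~\ref{decom}. Part~(1) says $d_{\pm,M}^*$ is an isometry, hence $d_{\pm,M}d_{\pm,M}^*=\bs{1}_{\V'}$ and $d_{\pm,M}^*d_{\pm,M}$ is the orthogonal projection onto the summand $d_{\pm,M}^*(\V')$; part~(2) gives $\mathcal{L}_M$ as the orthogonal sum of these two summands, the orthogonality being exactly what $\sigma(T_M')\subset(-1,1)$ guarantees. The intertwining relation then shows that each summand is $U_M'$-invariant and that, transported by the isometry $d_{\pm,M}^*$, the restriction of $U_M'$ to it is unitarily equivalent to $e^{\pm\im\arccos T_M'}$ on $\V'$. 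Therefore
\[
\sigma(U_M'|_{\mathcal{L}_M})=\{e^{\im\arccos\lambda},\,e^{-\im\arccos\lambda}:\lambda\in\sigma(T_M')\}=J^{-1}(\sigma(T_M')),
\]
the last equality because $J(e^{\pm\im\theta})=\cos\theta$ inverts to $e^{\pm\im\arccos(\cdot)}$. Multiplying the intertwining relation on the right by $d_{\pm,M}$ and adding the two projections $d_{\pm,M}^*d_{\pm,M}$, whose sum is the identity on $\mathcal{L}_M$, gives the asserted generator form $U_M'|_{\mathcal{L}_M}=d_{+,M}^*e^{\im\arccos T_M'}d_{+,M}\oplus d_{-,M}^*e^{-\im\arccos T_M'}d_{-,M}$. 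I expect the only real obstacle to be the operator-ordering bookkeeping in the middle paragraph: one has to keep the $T_M'$-functional-calculus factors consistently on the $\V'$-side so that the trigonometric cancellation applies verbatim, and confirm that $d_{\pm,M}^*$ and the normalization $\frac{1}{\sqrt{2(1-{T_M'}^2)}}$ are admissible on all of $\V'$, which rests on $\sigma(T_M')\subset(-1,1)$ and on Lemma~\ref{decom}.
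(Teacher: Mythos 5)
Your proposal is correct and takes essentially the same route as the paper: the paper's own proof likewise rests on the decomposition of $\mathcal{L}_M$ from Lemma~5 together with the intertwining relations $U_M'd_{\pm,M}^{*}=d_{\pm,M}^{*}e^{\pm\im\arccos T_M'}$, which it merely asserts as ``easily checked'' and which you verify explicitly via $1-2\cos\theta\,e^{\pm\im\theta}=-e^{\pm 2\im\theta}$. Note also that your final operator ordering $d_{+,M}^{*}e^{\im\arccos T_M'}d_{+,M}\oplus d_{-,M}^{*}e^{-\im\arccos T_M'}d_{-,M}$ matches the last line of the paper's proof, which is the type-correct form (the placement of the adjoints in the theorem statement itself is a typo).
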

\begin{proof}
By Lemma~\ref{decom}, for any $\psi\in \mathcal{L}_M$, $\psi=d_{+,M}^{*}d_{+,M}\psi \oplus d_{-,M}^{*}d_{-,M}\psi$ holds. 
We can easily check that 
	\begin{align}
        Ud_{+,M}^{*} &= d_{+,M}^{*}e^{\im \arccos(T)},  \\
        Ud_{-,M}^{*} &= d_{-,M}^{*}e^{-\im \arccos(T)}.
	\end{align}
Then we have 
	\begin{align}
        U\psi &= U(d_{+,M}^{*}d_{+,M} + d_{-,M}^{*}d_{-,M})\psi \\
        	&= \left(d_{+,M}^{*} e^{\im \arccos(T_M')}d_{+,M} + d_{-,M}^{*} e^{-\im \arccos(T_M')}d_{-,M}\right)\psi. 
        \end{align}
\end{proof}
\subsection{Tree case}
Let $\mathbb{T}=(V,A)$ be an infinite tree, and induced Hilbert spaces are $\A$ and $\V$. 
\seg{Let $V_n\subset V$ be all the vertices within the $n$-th depth that is, }
	\[ V_n=\{v\in V: \mathrm{dist}(o,v) \leq n \}.  \]
Put $\mathcal{A}_n$ and $\mathcal{A}_n^{(\pm)}\subset \ell^2(A)$ as 
	\begin{align*} 
        \mathcal{A}_n &= \spann\{\delta_e: e\in A_n\},\\ 
        \mathcal{A}_n^{(+)} &= \spann\{\delta_e: e\in A,\;t(e)\in V_n\},\; \mathcal{A}_n^{(-)} = \spann\{\delta_e: e\in A,\;o(e)\in V_n\}
        \end{align*}
and $\mathcal{V}_n\subset \ell^2(V)$ by 
	\[ \mathcal{V}_n=\spann\{\delta_v: v\in V_n\}. \]
We set the marked vertices $M$ by $V_n^c$. 
Define $d_\T^{(n)}:=d_{\T,M},d_O^{(n)}:=d_{O,M}$ by 
	\[ d_\T^{(n)}=\Pi_{\mathcal{V}_n}d_\T,\;d_O^{(n)}=\Pi_{\mathcal{V}_n}d_O. \] 
These adjoint operators are $d_\T^{(n),*}=d_\T^*\Pi_{V_n}$ and $d_O^{(n),*}=d_O^*\Pi_{V_n}$. 
The cut-off self-adjoint operator $T$ under the Dirichlet boundary condition 
outside of the $n$-th depth is denoted by $T_n$ and defined as 
	\[ T_n=\Pi_{\V_n}T\Pi_{\V_n}. \]
\seg{In this setting, we can observe the following properties on restricted boundary operators $d_{\T}$ and $d_O$ and their adjoints.}
\begin{remark}\label{remark}
\noindent
\begin{enumerate}
        \item 
        $d_\T^{(n)} =\Pi_{\V_n}d_\T=d_\T\Pi_{\A_n^{(+)}}$, $d_O^{(n)}=\Pi_{\V_n}d_O=d_O\Pi_{\A_n^{(-)}}$ \\
        or equivalently, $d_\T^{(n),*} = d_\T^*\Pi_{\V_n}=\Pi_{\A_n^{(+)}}d_\T^*$, $d_O^{(n),*}=d_O^*\Pi_{\V_n}=\Pi_{\A_n^{(-)}}d_O^*$; 
        \item $T_n=d_O^{(n)}d_\T^{(n),*}=d_\T^{(n)}d_O^{(n),*}$. 
        \end{enumerate}
\end{remark}
We define a unitary operator on $\ell^2(A)$ induced by $T_n$ as follows. 
\begin{definition}
For given $n\in \mathbb{N}$, we define  
	\begin{equation}
        U^{(n)}=S(2d_\T^{(n),*}d_\T^{(n)}-I_\A). 
        \end{equation}
\end{definition}
We call the cut-off quantum walk for this $U^{(n)}$ with the standard measurement $\mu$.
Remark that 
	\begin{equation}\label{senni}
        \langle \delta_f,U^{(n)}\delta_e\rangle 
        	= \begin{cases}
                  \langle \delta_f,U\delta_e\rangle & \text{: $t(e)\in V_n$, } \\
                  -\delta_{e,\bar{f}} & \text{: otherwise.}
        	\end{cases}
        \end{equation}
The following properties of $U^{(n)}$ can be easily seen.
\begin{remark}\label{remark2}
\noindent
\begin{enumerate}
\item $U^{(n)}$ is a unitary operator on $\mathcal{A}=\ell^2(A)$.
\item $\Pi_{\A_{n+1}}U^{(n)}=U^{(n)}\Pi_{\A_{n+1}}$.
\end{enumerate}
\end{remark}
The cut-off quantum walk $U^{(n)}$ is an approximation of $U$ in the following mean. 
\begin{proposition}
For every $\psi\in \ell^2(A)$, we have 
	\begin{equation}
        \lim_{n\to\infty}||U\psi-U^{(n)}\psi||^2_{\ell^2(A)}=0. 
        \end{equation}
\end{proposition}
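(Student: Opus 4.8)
The plan is to compare $U\psi$ and $U^{(n)}\psi$ directly on each arc and show that the discrepancy is supported on arcs that eventually carry vanishing $\ell^2$-mass. By the explicit formula in Eq.~(\ref{senni}), the matrix elements $\langle\delta_f,U^{(n)}\delta_e\rangle$ and $\langle\delta_f,U\delta_e\rangle$ coincide whenever $t(e)\in V_n$, and differ (the former being the pure flip $-\delta_{e,\bar f}$) only when $t(e)\notin V_n$. Hence the difference operator $U-U^{(n)}$ acts nontrivially only through those arcs $e$ with $t(e)\notin V_n$, i.e. through the projection $\Pi_{(\mathcal{A}_n^{(+)})^\perp}$ onto arcs whose terminus lies outside the $n$-th depth. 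First I would write $(U-U^{(n)})\psi = (U-U^{(n)})\,\Pi_{(\mathcal{A}_n^{(+)})^\perp}\psi$ and bound its norm by the norm of this tail piece of $\psi$.

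Next I would make this estimate quantitative. Since both $U$ and $U^{(n)}$ are unitary, the triangle inequality gives $\|(U-U^{(n)})\varphi\|\le 2\|\varphi\|$ for any $\varphi$, so with $\varphi=\Pi_{(\mathcal{A}_n^{(+)})^\perp}\psi$ we obtain
	\begin{equation*}
	\|U\psi-U^{(n)}\psi\|^2 \le 4\,\big\|\Pi_{(\mathcal{A}_n^{(+)})^\perp}\psi\big\|^2 = 4\sum_{e:\,t(e)\notin V_n}|\psi(e)|^2.
	\end{equation*}
Because $\psi\in\ell^2(A)$, the full sum $\sum_{e\in A}|\psi(e)|^2$ converges, and the sets $\{e: t(e)\notin V_n\}$ decrease to the empty set as $n\to\infty$ (every fixed arc $e$ has $t(e)\in V_n$ once $n\ge\mathrm{dist}(o,t(e))$). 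By the dominated convergence theorem for sums (or simply the tail of a convergent series), the right-hand side tends to $0$. This yields the claim.

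The only delicate point is verifying that $U-U^{(n)}$ really factors through $\Pi_{(\mathcal{A}_n^{(+)})^\perp}$, i.e. that the two operators agree on all of $\mathcal{A}_n^{(+)}$ and not merely on basis vectors $\delta_e$ with $t(e)\in V_n$. This follows from linearity together with Eq.~(\ref{senni}): for $e$ with $t(e)\in V_n$ one has $U^{(n)}\delta_e=U\delta_e$ columnwise, so $U^{(n)}$ and $U$ coincide on $\mathrm{span}\{\delta_e: t(e)\in V_n\}=\mathcal{A}_n^{(+)}$, hence $(U-U^{(n)})\Pi_{\mathcal{A}_n^{(+)}}=0$ and the difference is carried entirely by the complementary projection. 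I expect this bookkeeping—correctly identifying the support of the difference via $\mathcal{A}_n^{(+)}$ rather than $\mathcal{A}_n$—to be the main (mild) obstacle; everything else is a routine tail estimate exploiting unitarity of both evolutions.
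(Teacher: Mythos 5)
Your argument is correct, but it is not the paper's argument; the two are essentially dual to one another. You work on the \emph{input} side: by Eq.~(\ref{senni}) the columns of $U^{(n)}$ and $U$ coincide whenever $t(e)\in V_n$, hence $(U-U^{(n)})\Pi_{\A_n^{(+)}}=0$, and unitarity of both operators (Remark~\ref{remark2}) gives the upper bound $\|U\psi-U^{(n)}\psi\|^2\le 4\sum_{e:\,t(e)\notin V_n}|\psi(e)|^2$, i.e.\ a tail of $\psi$ itself, which vanishes as $n\to\infty$ since the sets $\{e: t(e)\notin V_n\}$ decrease to $\emptyset$. The paper instead works on the \emph{output} side: the two images can differ only at arcs $e$ with $o(e)\in V_n^c$, where $U^{(n)}\psi=-S\psi$; writing $U=2d_O^*d_\T-S$, the error is exactly $2(\bs{1}_\A-\Pi_{\A_n^{(-)}})d_O^*d_\T\psi$, and the identities of Remark~\ref{remark} together with Eq.~(\ref{identity}) turn this into the exact formula $\|U\psi-U^{(n)}\psi\|^2=4\|(\bs{1}_\V-\Pi_{\V_n})d_\T\psi\|^2$, a tail of $d_\T\psi\in\ell^2(V)$. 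What your route buys is economy: no boundary-operator calculus at all, only the matrix elements, linearity, and the crude bound $\|U-U^{(n)}\|\le 2$; the price is that you obtain an inequality rather than an identity. The paper's computation buys an exact expression for the error, quantified by how fast the mass of $d_\T\psi$ escapes beyond depth $n$, at the cost of the algebra with $d_O^*$, $d_\T$ and the projections. Both proofs end with the same elementary fact that tails of a summable series vanish, and both hinge on the bookkeeping you flagged yourself: using the correct arc set ($\A_n^{(+)}$ on the input side for you, $\A_n^{(-)}$ on the output side for the paper) rather than $\A_n$.
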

\begin{proof}
By Eq.~(\ref{senni}), putting $U\psi=\phi$ and $U^{(n)}\psi=\phi_n$, 
	\begin{align}
        ||\phi-\phi_n||^2 &= \sum_{e\in A}\left| (U\psi)(e)-(U^{(n)}\psi)(e) \right|^2, \noindent \\
        	&= \sum_{e: o(e)\in V_n^c} \left| (U\psi)(e)-(-S\psi)(e) \right|^2.
	\end{align}
Since $U=2d_O^*d_T-S$, 
	\begin{align}
        ||\phi-\phi_n||^2/4 &= || (\bs{1}_\A-\Pi_{\A_n^{(-)}})d_O^*d_\T \psi ||^2_{\A}. \label{1}
	\end{align}
By Remark~\ref{remark}, 
	\begin{align}
        ||d_O^*d_\T \psi||^2 
        	&= \langle d_O^*d_\T\psi, d_O^*d_\T\psi \rangle
                =\langle d_\T\psi, d_\T\psi \rangle \noindent \\
        	&= ||d_\T\psi||^2, \label{2} \\
        ||\Pi_{\A_n^{(-)}}d_O^*d_\T \psi||^2 
        	&= \langle d_O^{(n),*}d_\T\psi, d_O^{(n),*}d_\T\psi \rangle
                =\langle d_\T\psi, \Pi_{\V_n}d_\T\psi \rangle \noindent \\
        	&= ||\Pi_{\V_n}d_\T\psi||^2. \label{3}       
        \end{align}
Combining Eqs.~(\ref{2}) and (\ref{3}) with Eq.~(\ref{1}), we have 
	\begin{equation}
        ||\phi-\phi_n||^2/4=||(\bs{1}_\V-\Pi_{\V_n})d_\T\psi||^2. 
        \end{equation}
Therefore we have $||U\psi-U^{(n)}\psi||\to 0$, $(n\to\infty)$. 
\end{proof}
The cut-off quantum walk at the $n$-th depth satisfies that $U^{(n)}(\A_{n+1})=\A_{n+1}$ and 
and it is a unitary operator on $\A_{n+1}$. 
From Remark~\ref{remark2}, we have 
	\begin{align}\label{bunkai1}
        U^{(n)} &= U^{(n)}|_{\A_{n+1}}\oplus U^{(n)}|_{\A_{n+1}^\bot}, \\
        	&= U^{(n)}|_{\A_{n+1}}\oplus (-S\Pi_{\A_{n+1}^\bot}),
        \end{align}
where $\A_{n}^\bot=\spann\{\delta_{e}:e\in A_n^c\}$. 
RHS of the second term means that $U^{(n)}$ ``freezes" the dynamics at the outside of the $n$-th depth acting as a trivial reflection operator in the same edge, 
that is, $(U^{(n)}|_{\A_{n+1}^\bot}\psi)(e)=-\psi(\bar{e})$, for every $e\in A_{n+1}^c$. 
So we focus on the first term which gives a non-trivial dynamics of the walk from now on.  
From Remark~\ref{remark2}, we have $U^{(n)}=U^{(n)}|_{\A_{n+1}}\oplus U^{(n)}|_{\A_{n+1}^\bot}$ with $U^{(n)}|_{\A_{n+1}^\bot}= -S|_{\A_{n+1}^\bot}$. 
The second term which has a non-trivial structure, $U^{(n)}|_{\A_{n+1}}$, is decomposed as follows. 
Putting for $\varphi^{(\pm)}_{u,j}\in \mathbb{C}^A$ in Definition \ref{DefFlow}, which is not necessary to be $\ell^2$-summable, $\varphi^{(\pm,n)}_{u,j}:=\Pi_{\A_n}\varphi^{(\pm)}_{u,j}$
	\[ \mathcal{F}^{(\pm)}_n=\spann\left\{ \varphi^{(\pm,n)}_{u,j}: (u,j)\in \bigcup_{v\in V_{n-1}} \left(\{v\}\times \{1,\dots,m(v)-1\}\right) \right\}. \]
\begin{theorem}
Assume that there are no leaves, $\kappa_0\geq 1$, in $\mathbb{T}$. 
It holds that
\begin{equation}
U^{(n)}|_{\A_{n+1}}=U^{(n)}|_{\mathcal{L}_n}\oplus U^{(n)}|_{\mathcal{F}_{n+1}^{(+)}}\oplus U^{(n)}|_{\mathcal{F}_{n+1}^{(-)}}. 
\end{equation}
Here 
	\begin{align}
	U^{(n)}|_{\mathcal{F}_{n}^{(\pm)}} &= \pm \Pi_{\mathcal{F}_{\pm}^{(n)}}, \\
	U^{(n)}|_{\mathcal{L}_n} &= d_+^{(n),*}e^{\im\arccos T_n}d_+^{(n)}\oplus d_-^{(n),*}e^{-\im\arccos T_n}d_-^{(n)}.
	\end{align}
\end{theorem}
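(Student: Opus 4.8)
The final theorem decomposes the non-trivial part $U^{(n)}|_{\A_{n+1}}$ into three orthogonal invariant pieces: an inherited part $U^{(n)}|_{\mathcal{L}_n}$ (which Theorem 3 already handles) and two birth parts $U^{(n)}|_{\mathcal{F}^{(\pm)}_{n+1}}$ on which $U^{(n)}$ acts as $\pm 1$. This is the finite-depth analogue of Theorem 1, and the whole point is that the truncated flows $\varphi^{(\pm,n)}_{u,j}$ play the role the genuine $\ell^2$-flows played on the infinite tree.

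Let me sketch the proof.

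=== PROOF PROPOSAL ===

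The plan is to mirror the structure of Theorem~\ref{thm1} in the finite-depth setting, exploiting that $U^{(n)}|_{\A_{n+1}}$ is a genuine unitary on a finite-dimensional-like invariant subspace and that $T_n$ has spectrum strictly inside $(-1,1)$ by Lemma~5(4). First I would establish the three-fold orthogonal decomposition
	\[ \A_{n+1}=\mathcal{L}_n\oplus \mathcal{F}^{(+)}_{n+1}\oplus \mathcal{F}^{(-)}_{n+1} \]
as Hilbert spaces, where $\mathcal{L}_n=d^{(n),*}_\T(\V)+d^{(n),*}_O(\V)$ is the inherited subspace relative to the cut-off. The inclusion $\mathcal{L}_n\subset \A_{n+1}$ and orthogonality of the birth parts to $\mathcal{L}_n$ follow from the adjoint kernel identities of Remark~\ref{remark} exactly as in the infinite case: the truncated flows $\varphi^{(\pm,n)}_{u,j}$ satisfy the local zero-sum (Kirchhoff) condition $\sum_{e:o(e)=v}\varphi^{(\pm,n)}_{u,j}(e)=0$ at every interior vertex $v\in V_{n-1}$, hence lie in $\ker(d^{(n)}_O)\cap \mathcal{H}^{(\mp)}$, while $\mathcal{L}_n$ is the orthogonal complement of $\ker d^{(n)}_\T\cap \ker d^{(n)}_O$. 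The orthogonality $\mathcal{F}^{(+)}_{n+1}\perp \mathcal{F}^{(-)}_{n+1}$ is inherited from $\mathcal{H}^{(+)}\perp \mathcal{H}^{(-)}$ as in Lemma~\ref{subtreesCONS}.

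Next I would verify that each summand is $U^{(n)}$-invariant and compute the action. On $\mathcal{L}_n$ the result is immediate from Theorem~3 applied to the Dirichlet cut-off with $M=V_n^c$, which gives $U^{(n)}|_{\mathcal{L}_n}=d^{(n),*}_+e^{\im\arccos T_n}d^{(n)}_+\oplus d^{(n),*}_-e^{-\im\arccos T_n}d^{(n)}_-$. For the birth parts I would use the formula $U^{(n)}=2d^{(n),*}_O d^{(n)}_\T-S$ together with the relation $d_O=d_\T S$; since any $\psi\in\mathcal{F}^{(\pm)}_{n+1}\subset\ker d^{(n)}_\T$ satisfies $d^{(n)}_\T\psi=0$, we get $U^{(n)}\psi=-S\psi=\mp\psi$ because $\psi\in\mathcal{H}^{(\mp)}$ forces $S\psi=\mp\psi$. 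This yields $U^{(n)}|_{\mathcal{F}^{(\pm)}_{n+1}}=\pm\bs{1}=\pm\Pi_{\mathcal{F}^{(n+1)}_{\pm}}$ and confirms both the invariance and the stated eigenvalues.

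The main obstacle will be the \emph{exhaustiveness} of the decomposition, i.e. showing $\A_{n+1}$ is spanned by these three subspaces with no leftover, equivalently that the truncated flows $\{\varphi^{(\pm,n)}_{u,j}\}$ together with $\mathcal{L}_n$ exactly fill the birth complement $\ker d^{(n)}_O$ restricted to $\A_{n+1}$. Here the delicate point is that truncation at depth $n$ introduces boundary arcs in $A_{n}\setminus A_{n-1}$ at which the Kirchhoff condition need not close up, so the counting differs from the infinite case. I would resolve this by a dimension count on each depth shell: parametrizing $\ker d^{(n)}_O\cap\mathcal{H}^{(-)}$ vertex by vertex and matching the number $\sum_{v\in V_{n-1}}(m(v)-1)$ of independent truncated flows against the codimension of $\mathcal{L}_n$ inside the relevant $S$-eigenspace, using Lemma~6 to guarantee $d^{(n),*}_\T(\V)\cap d^{(n),*}_O(\V)=\bs{0}$ so that $\dimm\mathcal{L}_n=2\dimm\V_n$. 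The finite-depth recursion in Definition~\ref{DefFlow}, restricted to $V_{n-1}$, produces precisely the right number of linearly independent vectors by the same argument as in the linear-independence part of Lemma~\ref{subtreesCONS}, so the three subspaces account for the full dimension of $\A_{n+1}$ and the direct-sum decomposition is complete.
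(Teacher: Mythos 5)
Your proposal follows essentially the same route as the paper's own proof: the inherited part is delegated to the cut-off version of Theorem~3, the truncated flows are identified as elements of $\ker d_\T^{(n)}\cap\mathcal{H}^{(\mp)}\cap\A_{n+1}$ on which $U^{(n)}=-S$ acts as $\pm\bs{1}$, and exhaustiveness is settled by exactly the paper's dimension count (Lemma~6 giving $\dim\mathcal{L}_n=2|V_n|$, matched against the number of linearly independent truncated flows, which is $\sum_{u\in V_n}(m(u)-1)=|V_{n+1}|-1-|V_n|$ per sign). Only two cosmetic slips: for $\psi\in\mathcal{H}^{(\mp)}$ one has $-S\psi=\pm\psi$ (not $\mp\psi$), consistent with your correct final conclusion, and the flow count should run over $V_n$ rather than $V_{n-1}$ to match the space $\mathcal{F}_{n+1}^{(\pm)}$ appearing in the decomposition.
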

\begin{proof}
The proof of the inherited part has been already done. So we will proof the birth part. 
The dimension of $\mathcal{L}_n$ is $\mathrm{dim}(\mathcal{L}_n)=2|V_n|$ while the dimension of the total state space of $\A_{n+1}$ is 
$\mathrm{dim}(\A_{n+1})=2|E_{n+1}|=2(|V_{n+1}|-1)$ which implies 
	\[\mathrm{dim}(\mathcal{L}_n^\bot \cap \A_{n+1})=2(|V_{n+1}|-|V_n|-1)>0. \]
Therefore the remaining spaces $\mathcal{L}_n^\bot \cap \A_{n+1}$ exist. 
By the way, 
	\begin{align*}
        \mathrm{dim}(\F_+^{(n)})=\mathrm{dim}(\F_-^{(n)}) &= \left| \bigcup_{u\in V_n}\{u\}\times \{1,\dots,m(u)-1\} \right|=
        	|E_{n+1}|-|V_n| \\
        	&=|V_{n+1}|-1-|V_n| \\
                &=\frac{1}{2}\mathrm{dim}(\mathcal{L}_n^\bot \cap \A_{n+1}). 
        \end{align*}
It holds that 
	\[\mathcal{L}_n^\bot \cap \A_{n+1}
        	=\left(\ker(d_\T^{(n)})\cap \mathcal{H}^{(-)} \cap \A_{n+1}\right)\oplus\left(\ker(d_\T^{(n)})\cap \mathcal{H}^{(+)} \cap \A_{n+1}\right). \]
Since $\varphi_n^{(\pm)}(u,j)\in \ker(d_\T^{(n)})\cap \mathcal{H}^{(\mp)} \cap \A_{n+1}$ and 
$\{\varphi_n^{(\pm)}(u,j)\;:\; j\in V_{n-1},\;j\in\{1,\dots,m(u)-1\}\}$ are orthogonal to each other, we obtain $\F_n^{(\pm)}=\ker(d_\T^{(n)})\cap \mathcal{H}^{(\mp)}\cap \A_{n+1}$. 
\end{proof}
Finally, we compute the density of the birth part. 
We set $B_n=|V_n|$, $\partial B_n=|V_{n+1}|-|V_n|$. 
\begin{corollary}
Assume that there are no leaves in $\mathbb{T}$. 
Let $\rho_n^{(\pm)}$ be the density of $\ker(\pm \bs{1}-U^{(n)})$, that is, $\rho_n^{(\pm)}=\dim (\ker(\pm \bs{1}-U^{(n)}))/\dim (\A_{n+1})$. 
Put $h_+:=\limsup_{n\to\infty}\partial B_n/B_n$ and $h_-:=\liminf_{n\to\infty}\partial B_n/B_n$. 
If $h_->0$, then 
	\[ 0<\frac{h_-}{2(1+h_-)}=\liminf_{n\to\infty}\rho_n^{(\pm)}\leq \limsup_{n\to\infty}\rho_n^{(\pm)}=\frac{h_+}{2(1+h_+)}\leq 1/2 \]
In particular, if $\lim_{n\to\infty}\partial B_n/B_n=:h$ exists, then 
	\[ \lim_{n\to\infty} \rho_n^{(\pm)}=\frac{h}{2(1+h)}. \]
\end{corollary}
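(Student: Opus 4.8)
The plan is to reduce the count of $\pm 1$–eigenvectors to the birth blocks already isolated in the preceding theorem, and then to convert the resulting dimension formula into the claimed density via a monotone‑function argument.

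First I would invoke the previous theorem, which gives $U^{(n)}|_{\A_{n+1}}=U^{(n)}|_{\mathcal{L}_n}\oplus U^{(n)}|_{\F^{(+)}_{n+1}}\oplus U^{(n)}|_{\F^{(-)}_{n+1}}$, where the two birth blocks act as $\pm\Pi_{\F^{(\pm)}_{n+1}}$ and the inherited block has spectrum $J^{-1}(\sigma(T_n))$. Since $\sigma(T_n)\subset(-1,1)$ by the cut‑off lemma, every point of $J^{-1}(\sigma(T_n))$ has the form $e^{\pm\im\arccos\nu}$ with $\arccos\nu\in(0,\pi)$, hence is different from $\pm 1$; so the inherited block carries no $\pm 1$–eigenvector. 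Matching signs then yields
\[
\ker\!\left(\pm\bs{1}-U^{(n)}|_{\A_{n+1}}\right)=\F^{(\pm)}_{n+1},
\qquad
\dimm\ker\!\left(\pm\bs{1}-U^{(n)}|_{\A_{n+1}}\right)=\dimm\F^{(\pm)}_{n+1}.
\]

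Next I would insert the dimension counts from the proof of the previous theorem, namely $\dimm\F^{(\pm)}_{n+1}=|V_{n+1}|-1-|V_n|=\partial B_n-1$ and $\dimm\A_{n+1}=2|E_{n+1}|=2(|V_{n+1}|-1)=2(B_{n+1}-1)$, so that
\[
\rho_n^{(\pm)}=\frac{\partial B_n-1}{2(B_{n+1}-1)}.
\]
Writing $x_n:=\partial B_n/B_n$ and $g(x):=x/(2(1+x))$, a short manipulation gives $\rho_n^{(\pm)}-g(x_n)=-B_n/\big(2B_{n+1}(B_{n+1}-1)\big)$, whose modulus is at most $1/(2(B_{n+1}-1))$. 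Because $\mathbb{T}$ has no leaves each depth‑$k$ layer ($k\ge1$) is nonempty and nondecreasing, so $\partial B_n\ge 2$ for all $n$ and $B_n\ge 1+2n\to\infty$; hence the correction term tends to $0$, and $\liminf$ and $\limsup$ of $\rho_n^{(\pm)}$ coincide with those of $g(x_n)$.

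Finally, $g$ is continuous and strictly increasing on $[0,\infty)$ with $g(x)\to 1/2$ as $x\to\infty$, so $\liminf g(x_n)=g(\liminf x_n)=g(h_-)=h_-/(2(1+h_-))$ and $\limsup g(x_n)=g(h_+)=h_+/(2(1+h_+))\le 1/2$; positivity of the lower value is exactly the hypothesis $h_->0$, and the case of an existing limit follows at once by taking $h_-=h_+=h$. The only genuine point to watch is the interchange $\liminf g(x_n)=g(\liminf x_n)$, the standard fact that a continuous monotone map commutes with $\liminf/\limsup$ along a bounded sequence; what legitimises it here are precisely the vanishing correction $\rho_n^{(\pm)}-g(x_n)\to0$ and the growth $B_n\to\infty$, both secured by the no‑leaf assumption. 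I expect this bookkeeping — rather than any new conceptual input — to be the main thing to get right.
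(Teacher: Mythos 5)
Your proof is correct and is essentially the argument the paper intends: the paper states this corollary without any written proof, leaving it as an immediate consequence of Theorem~4 together with the dimension counts $\dim \F^{(\pm)}_{n+1}=|V_{n+1}|-1-|V_n|=\partial B_n-1$ and $\dim \A_{n+1}=2(B_{n+1}-1)$ already displayed in that theorem's proof, which is exactly what you assemble. Your two supporting observations --- that $\sigma(T_n)\subset(-1,1)$ excludes $\pm 1$ from the inherited spectrum $J^{-1}(\sigma(T_n))$, and that $\ker(\pm\bs{1}-U^{(n)})$ must be read on $\A_{n+1}$ (on $\A_{n+1}^{\perp}$ the operator $-S$ would contribute infinite-dimensional $\pm 1$-eigenspaces) --- are precisely the points the paper leaves implicit, and your limit bookkeeping via $g(x)=x/(2(1+x))$, the vanishing correction term, and $B_n\to\infty$ under the no-leaf assumption is sound.
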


\noindent \\
{\bf Acknowledgments}
YuH's work was supported in part by Japan Society for the
Promotion of Science Grant-in-Aid for Scientific Research (C) 25400208, (B) 24340031 and (A) 15K02055 for Challenging Exploratory Research 26610025. 
ES thanks to the financial supports of the Grant-in-Aid for Young Scientists (B) 16K17637 and 
Japan-Korea Basic Scientific Cooperation Program Non-commutative Stochastic Analysis; New Aspects of Quantum
White Noise and Quantum Walks (2015-2016). 

%


\begin{small}
\bibliographystyle{jplain}

\end{small}


\end{document}